\documentclass[12pt]{article}

\usepackage{amsmath}
\usepackage{amssymb}
\usepackage{amsthm}
\usepackage{latexsym}
\usepackage{cite}
\usepackage{psfrag}
\usepackage{epsfig}
\usepackage{graphicx}
\usepackage{color}
\usepackage{amsfonts}
\usepackage{mathrsfs}
\usepackage{url}
\usepackage{hyperref}
\usepackage{array}
\usepackage{flushend}
\usepackage{algorithm}
\usepackage{algorithmic}

\textwidth165mm \textheight205mm \oddsidemargin0mm

\newtheorem{theorem}{Theorem}

\newtheorem{lemma}[theorem]{Lemma}
\theoremstyle{definition}

\newtheorem{example}[theorem]{Example}
\newtheorem{remark}[theorem]{Remark}
\newtheorem{notation}[theorem]{Notation}

\def\BB{\mathfrak{B}}
\def\CC{\mathcal{C}}

\def\V{\mathcal{V}}

\def\C{\mathscr{C}}
\def\M{\mathscr{M}}

\def\N{\mathscr{N}}
\def\PG{\mathrm{PG}}
\def\F{\mathbb{F}}
\def\TT{\mathbb{T}}

\def\P{\mathbf{P}}
\def\c{\mathbf{c}}
\def\v{\mathbf{v}}
\def\x{\mathbf{x}}

\begin{document}
\title{\textbf{On cosets weight distributions of the doubly-extended Reed-Solomon codes of codimension 4}
\author{Alexander A. Davydov\footnote{The research of A.A.~Davydov was done at IITP RAS and supported by the Russian Government (Contract No 14.W03.31.0019).} \\
{\footnotesize Institute for Information Transmission Problems
(Kharkevich
institute), Russian Academy of Sciences}\\
{\footnotesize Bol'shoi Karetnyi per. 19, Moscow,
127051, Russian Federation. E-mail: adav@iitp.ru}
\and Stefano Marcugini\footnote{The research of  S. Marcugini and F.~Pambianco was
 supported in part by the Italian
National Group for Algebraic and Geometric Structures and their Applications (GNSAGA - INDAM) (No U-UFMBAZ-2017-000532, 28.11.2017) and by
University of Perugia (Project: Curve, codici e configurazioni di punti, Base Research
Fund 2018, No 46459, 15.06.2018).}  \,and Fernanda Pambianco$^\dag$  \\
{\footnotesize Department of Mathematics and Computer Science, University of Perugia,}\\
{\footnotesize Via Vanvitelli~1, Perugia, 06123, Italy. E-mail:
\{stefano.marcugini,fernanda.pambianco\}@unipg.it}}
}
\date{}
\maketitle
\textbf{Abstract.} 
 We consider the $[q+1,q-3,5]_q3$ generalized doubly-extended Reed-Solomon code of codimension~$4$ as the code associated with the twisted cubic in the projective space $\mathrm{PG}(3,q)$. Basing on the point-plane incidence matrix of $\mathrm{PG}(3,q)$, we  obtain the number of weight 3 vectors in all the cosets of the considered code. This allows us to classify the cosets by their weight distributions and to obtain these distributions. The weight of a coset is the smallest Hamming weight of any vector in the coset. For the cosets of equal weight having distinct weight distributions, we prove that the difference between the $w$-th components, $3<w\le q+1$, of the distributions is uniquely determined by the difference between  the $3$-rd components.  This implies an interesting (and in some sense unexpected) symmetry of the obtained distributions.

\textbf{Keywords:} 
Reed-Solomon codes, cosets weight distribution, MDS codes, twisted cubic, projective spaces.

\textbf{Mathematics Subject Classication (2010).} 94B05, 51E21, 51E22

\section{Introduction}\label{sec_Intro}
Let $\F_{q}$ be the Galois field with $q$ elements, $\F_{q}^*=\F_{q}\setminus\{0\}$, $\F_q^+=\F_q\cup\{\infty\}$. Let $\F_{q}^{n}$ be
the space of $n$-dimensional vectors over ${\mathbb{F}}_{q}$.  We denote by  $[n,k,d]_{q}R$ an $\F_q$-linear code of length~$n$, dimension $k$, minimum distance~$d$, and covering radius $R$. If $d=n-k+1$,   it is a maximum distance separable (MDS) code.
Let $\PG(N,q)$ be the $N$-dimensional projective space over $\F_q$.
 For an introduction to coding theory see \cite{Blahut,MWS,Roth}.  For an introduction to projective spaces over finite fields and connections between  projective geometry and coding theory see \cite{Hirs_PGFF,HirsSt-old,HirsStor-2001,HirsThas-2015,LandSt,EtzStorm2016}.

In an $(N+1)\times(q+1)$ parity check matrix of a $[q+1,q-N,N+2]_q$ generalized doubly-extended
Reed-Solomon (GDRS) code, the $j$-th column has the form $(v_j,v_j\alpha_j,v_j\alpha_j^2,\ldots,v_j\alpha_j^N)^{tr}$, where $j=1,2,\ldots,q$; $\alpha_1,\ldots,\alpha_q$ are distinct elements of $\F_q$; $v_1,\ldots,v_q$ are nonzero (not necessarily distinct) elements of $\F_q$. Also, this matrix contains one more column $(0,\ldots,0,v)^{tr}$ with $v\neq 0$.
 The code, dual to a GDRS code, is a GDRS code too.  Clearly, a GDRS code is MDS.

An $n$-arc in  $\PG(N,q)$, with $n\ge N + 1\ge3$, is a
set of $n$ points such that no $N +1$ points belong to
the same hyperplane of $\PG(N,q)$. An $n$-arc is complete if it is not contained in an $(n+1)$-arc. Arcs and MDS codes are equivalent objects, see e.g. \cite[Sec. 11.6]{MWS}, \cite[Sec. 3.1]{LandSt}, \cite[Sec. 2.1]{EtzStorm2016}.

In $\PG(N,q)$, $2\le N\le q-2$, a normal rational curve is any $(q+1)$-arc projectively equivalent to the arc
$\{(1,t,t^2,\ldots,t^{N-1}, t^N):t\in \F_q\}\cup \{(0,\ldots,0 ,1)\}$.  The points (in homogeneous coordinates) of a normal rational curve in $\PG(N,q)$
treated as columns define a parity check matrix of a $[q + 1,q-N,N + 2]_q$ GDRS code \cite{EtzStorm2016,LandSt,Roth}. We say that this GDRS code is \emph{associated} with the normal rational curve. In $\PG(3,q)$, the normal rational curve is called a  \emph{twisted cubic} \cite{Hirs_PG3q,HirsThas-2015}. We denote the cubic by $\C$, see Section \ref{subsec_twis_cub} for preliminaries.

Let $\CC_\C$ be the $[q+1,q-3,5]_q3$ GDRS code associated with the cubic $\C$.

Twisted cubics in $\PG(3,q)$ have been widely studied; see e.g. \cite{BrHirsTwCub,Hirs_PG3q,GiulVincTwCub,BDMP_TwistCubArX} and the references therein.  In particular, in \cite{Hirs_PG3q}, the orbits of planes and points under the group $G_q$ of the projectivities fixing a cubic are considered. In \cite{BDMP_TwistCubArX}, the structure of the point-plane incidence matrix  in $\PG(3,q)$ with respect to the orbits of points and planes under $G_q$ is described; see Section~\ref{subsec_incid} for useful details.

Also in \cite{BDMP_TwistCubArX}, it is shown that  twisted cubics can be treated as  multiple $\rho$-saturating sets with $\rho=2$ which, in turn, give rise to  asymptotically optimal  non-binary linear multiple covering $[q+1,q-3,5]_q3$ codes of radius $R=3$. This means that the code $\CC_\C$ can be viewed as an asymptotically optimal  multiple covering. For an introduction to multiple covering codes and multiple saturating sets see \cite{CHLL_CovCodBook,BDGMP_MultCov,BDGMP_MultCov2016}.

A \emph{coset} of a code is a translation of the code. A coset $\V$ of an $[n,k,d]_{q}R$ code $\CC$ can be represented as
\begin{align}\label{eq1_coset}
  \V=\{\x\in\F_q^n\,|\,\x=\c+\v,\c\in \CC\}\subset\F_q^n
\end{align}
 where $\v\in \V$ is a vector  fixed for the given representation and $\c$ is a codeword; see e.g. \cite{Blahut,HufPless,MWS,HandbookCodes,Roth} and the references therein. For preliminaries, see Section \ref{subsec_cosets}.

 The weight distribution of code cosets, including the number of the cosets with distinct distributions, is interesting by itself; it is an important combinatorial property of a code. In particular, the distribution serves to estimate decoding results.  There are many papers connected with distinct aspects of the weight distribution of cosets for codes over distinct fields and rings, see e.g. \cite{Blahut}, \cite[Sec. 5.5, 6.6, 6.9]{MWS}, \cite[Sec. 7]{HufPless}, \cite[Sec. 10]{HandbookCodes}, \cite{AsmMat,Bonneau1990,BonnDuursma,CharpHelZin,CheungIEEE1989,CheungIEEE1992,DMP_Weight1_2Cosets,Delsarte4Fundam,DelsarteLeven,Helleseth,%
JurrPellik,KaipaIEEE2017,KasLin,MW1963,Schatz1980,ZDK_DHRIEEE2019}, \cite[Sec. 6.3]{DelsarteBook}, and the references therein. Nevertheless,  as far as it is known to the authors, the weight distribution of GDRS code cosets is an open problem.

\emph{In this paper}, we consider the weight distribution of the cosets of the $[q+1,q-3,5]_q3$ GDRS code of codimension $4$. We consider it as the code $\CC_\C$ associated with the twisted cubic. By a geometrical way, basing on the $\PG(3,q)$ point-plane incidence matrix of \cite{BDMP_TwistCubArX}, we  obtain the number of weight 3 vectors in all the cosets of $\CC_\C$. This allows us to use the results  of the paper \cite{Bonneau1990} that connect the full weight distribution of a coset of an $[n,k,d]_q$ MDS code with the first $d-2$ components of the distribution. As a result, we obtain exact formulas for the weight distribution of all the cosets of the $[q+1,q-3,5]_q3$ GDRS code and classify the cosets of $\CC_\C$ by their weight distributions.

 The weight of a coset is the smallest Hamming weight of any vector in the coset. For the cosets of equal weight having distinct weight distributions, we prove the following  \emph{property of differences}: the difference between  the $w$-th components, $3<w\le q+1$, of the distributions is uniquely determined by the difference between  the $3$-rd components.  This implies an interesting (and in some sense unexpected) \emph{symmetry of the weight distributions of the cosets of the $[q+1,q-3,5]_q3$ code}.

 Note that the weight 3 cosets of $\CC_\C$ contain the so-called farthest-off points or deep-holes. The investigation of the deep-holes of Reed-Solomon codes is of great interest, see e.g. \cite{KaipaIEEE2017,ZDK_DHRIEEE2019} and the references therein.

 This paper is organized as follows. Section \ref{sec_prelimin} contains preliminaries. In Section \ref{sec_distrib3}, we give the distribution of weight 3 vectors in the cosets of the code $\CC_\C$ and classify the cosets. In Section~\ref{sec_wd_coset}, the exact formulas for the weight distribution of all the cosets of the $[q+1,q-3,5]_q3$ GDRS code are obtained.  In Section \ref{sec_symmetry}, we prove the property of differences and consider a symmetry in the weight distributions of the cosets of the code $\CC_\C$.

Throughout the paper, we consider $q\ge5$.

\section{Preliminaries}\label{sec_prelimin}

\subsection{Twisted cubic}\label{subsec_twis_cub}
In this subsection, we summarize some known results on twisted cubics from \cite[Ch. 21]{Hirs_PG3q}.

Let $\P(x_0,x_1,x_2,x_3)$ be a point of $\PG(3,q)$ with the homogeneous coordinates $x_i\in\F_{q}$; the leftmost nonzero coordinate is equal to $1$. For $t\in\F_q^+$, let  $P(t)$ be a point such that
\begin{align*}
&P(t)=\P(1,t,t^2,t^3)\text{ if }t\in\F_q,~~P(\infty)=\P(0,0,0,1).
\end{align*}
Let $\C\subset\PG(3,q)$ be the \emph{twisted cubic} consisting of $q+1$ points $P_1,\ldots,P_{q+1}$ no four of which are coplanar.
We consider $\C$ in the canonical form
\begin{align}\label{eq2_cubic}
\C=\{P_1,P_2,\ldots,P_{q+1}\}=\{P(t)|t\in\F_q^+\}.
\end{align}
Let $\boldsymbol{\pi}(c_0,c_1,c_2,c_3)\subset\PG(3,q)$, $c_i\in\F_q$, be the plane with equation
$c_0x_0+c_1x_1+c_2x_2+c_3x_3=0.$

In every point $P(t)\in\C$, there is an \emph{osculating plane} $\pi_\text{osc}(t)$ such that
 \begin{align*}
  \pi_\text{osc}(t)=\boldsymbol{\pi}(-t^3,3t^2,-3t, 1)\text{ if }t\in\F_q; ~\pi_\text{osc}(\infty)=\boldsymbol{\pi}(0,0,0,1);~P(t)\in\pi_\text{osc}(t).
 \end{align*}
In addition, $P(t)$ is the unique point of $\C$ belonging to $\pi_\text{osc}(t)$, $t\in\F_q^+$.
In total, there are $q+1$ osculating planes; they form the osculating developable $\Gamma$ to $\C$, that is, a \emph{pencil of planes} for $q\equiv0\pmod3$ or a \emph{cubic developable} for $q\not\equiv0\pmod3$. For $q\equiv0\pmod3$, the intersection line of all the osculating planes is the axis of the pencil;    every point of the axis lies on exactly $q+1$ osculating planes and exactly one tangent to $\C$.

A \emph{chord} of $\C$ is a line through a pair of real points of $\C$ or a pair of complex conjugate points. In the last  case it is an \emph{imaginary chord}. If the real points are distinct, it is a \emph{real chord}. If the real points coincide with each other, it is a \emph{tangent}  to $\C$.

\begin{notation}\label{notation_1}
The following notation is used:
\begin{align*}
  &G_q && \text{the group of projectivities in } \PG(3,q) \text{ fixing }\C;\displaybreak[3] \\
  &\Gamma &&\text{the osculating developable to }  \C;\displaybreak[3]\\
&\Gamma\text{-plane}  &&\text{an osculating plane of }\Gamma;\displaybreak[3]\\
&d_\C\text{-plane}&&\text{a plane containing \emph{exactly} $d$ distinct points of }\C,~d=0,1,2,3;\displaybreak[3]\\
&1_{\C}\setminus\Gamma\text{-plane}&&\text{a $1_\C$-plane not in $\Gamma$;}\displaybreak[3]\\
&\C\text{-point}&&\text{a point  of $\C$;}\displaybreak[3]\\
&\mu_\Gamma\text{-point}&&\text{a point off $\C$ lying on \emph{exactly} $\mu$ osculating planes, } \displaybreak[3]\\
&&&\mu_\Gamma\in\{0_\Gamma,1_\Gamma,3_\Gamma\}\text{ for }q\not\equiv0\pmod 3,\displaybreak[3]\\
&&&\mu_\Gamma\in\{(q+1)_\Gamma\}\text{ for }q\equiv0\pmod 3;\displaybreak[3]\\
&\text{T-point}&&\text{a point  off $\C$  on a tangent to $\C$ for $q\not\equiv0\pmod 3$;}\displaybreak[3]\\
&\text{TO-point}&&\text{a point  off $\C$ on a tangent and one osculating plane for }q\equiv 0\pmod 3;\displaybreak[3]\\
&\text{RC-point}&&\text{a point  off $\C$  on a real chord;}\displaybreak[3]\\
&\text{IC-point}&&\text{a point  on an imaginary chord;}\displaybreak[3]\\
&\#S&&\text{the cardinality of a set }S.
\end{align*}
\end{notation}

\begin{remark}\label{rem2:afterNot1}
For $q\not\equiv0\pmod 3$, the $\mu_\Gamma$-points with $\mu_\Gamma\in\{0_\Gamma,1_\Gamma,3_\Gamma\}$ can be viewed also as RC- or IC-points, see \eqref{eq2_=1_orbit_point} and \eqref{eq2_=1b_orbit_point} in Theorem \ref{th2_HirsCor4,5}(ii) below. Namely such a treatment of the $\mu_\Gamma$-points is used in this paper, see Remark \ref{rem2:r_ij} and Theorem \ref{th3_3_coset} below.

For $q\equiv0\pmod 3$, all the $q+1$ points of the axis of the $\Gamma$-plane pencil are the $\mu_{q+1}$-points and vice versa. Also, the axis intersects every tangent so that every point of the axis lies on exactly one tangent; among the remaining $q$ points one  belongs to $\C$ and $q-1$ are TO-points.
\end{remark}

\begin{theorem}\label{th2_HirsCor4,5}
\emph{\cite[Ch. 21]{Hirs_PG3q}} Under $G_q$, there are five orbits $\N_i$ of planes and five orbits $\M_j$ of points. These orbits have the following properties:

\textbf{\emph{(i)}} For all $q$, the orbits $\N_i$ of planes are as follows: $\N_1=\{\Gamma\text{-planes}\},~\N_{2}=\{2_\C\text{-planes}\}$,
$\N_{3}=\{3_\C\text{-planes}\}, ~ \N_{4}=\{1_\C\setminus\Gamma\text{-planes}\},~\N_{5}=\{0_\C\text{-planes}\}. $

\textbf{\emph{(ii)}} For $q\not\equiv0\pmod 3$, the orbits $\M_j$ of points are as follows:
\begin{align}
&\M_1=\C,\, \M_2=\{\text{\emph{T}}\text{-points}\},\,\M_3=\{3_\Gamma\text{-points}\},\, \M_4=\{1_\Gamma\text{-points}\},\,\M_5=\{0_\Gamma\text{-points}\};\notag\displaybreak[3]\\
&\#\M_1=q+1,\,\#\M_{2}=q(q+1),\,\#\M_{3}=\frac{q^3-q}{6},\,\#\M_{4}=\frac{q^3-q}{2},\,\#\M_{5}=\frac{q^3-q}{3}.\displaybreak[3]\label{eq2_point_orbits_gen}\\
&\text{Also,}\notag\displaybreak[3]\\
 &\text{if } q\equiv1\pmod 3 \text{ then } \M_{3}\cup\M_{5}=\{\text{\emph{RC}-points}\}, ~\M_{4}=\{\text{\emph{IC}-points}\};\displaybreak[3]\label{eq2_=1_orbit_point}\\
 &\text{if } q\equiv-1\pmod 3\text{ then }\M_{3}\cup\M_{5}=\{\text{\emph{IC}-points}\},~
 \M_{4}=\{\text{\emph{RC}-points}\}.\label{eq2_=1b_orbit_point}
\end{align}

\textbf{\emph{(iii)}} For $q\equiv0\pmod 3$, the orbits $\M_k$ of points are as follows:
\begin{align}
&\M_1=\C,~\M_2=\{(q+1)_\Gamma\text{-points}\},~\M_3=\{\text{\emph{TO}-points}\},~\M_4=\{\text{\emph{RC}-points}\},\displaybreak[3]\notag\\
&\M_5=\{\text{\emph{IC}-points}\};\,\#\M_1=\#\M_2=q+1; \,\#\M_3=q^2-1;\,\#\M_4=\#\M_5=\frac{q^3-q}{2}.\label{eq2_point_orbits_j=0}
\end{align}

\textbf{\emph{(iv)}} No two chords of $\C$ meet off $\C$. Every point off $\C$ lies on exactly one chord of $\C$.
\end{theorem}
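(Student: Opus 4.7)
The plan is to exploit the identification $G_q \cong \mathrm{PGL}(2,q)$ via the parametrization $t \in \F_q^+ \mapsto P(t)$, under which $G_q$ acts sharply $3$-transitively on $\C$; in particular $|G_q| = q(q^2-1)$. Almost every orbit statement below reduces to combining this $3$-transitivity with orbit--stabilizer counting against the known totals of $q^3+q^2+q+1$ planes and $q^3+q^2+q+1$ points of $\PG(3,q)$.

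For part (i), since $\C$ contains no four coplanar points, every plane $\pi$ satisfies $|\pi \cap \C| \in \{0,1,2,3\}$. Transitivity of $G_q$ on unordered triples and pairs of $\C$-points immediately yields the single orbits $\N_3$ and $\N_2$. Among the $1_\C$-planes through $P(t)$, exactly one is the osculating plane $\pi_\text{osc}(t)$, giving $\N_1$; the stabilizer of $P(t)$ in $G_q$ is the affine subgroup of $\mathrm{PGL}(2,q)$ fixing $t$, and I would show it acts transitively on the remaining $1_\C$-planes in the pencil through $P(t)$, yielding $\N_4$. Transitivity on $\N_5$ is then settled by computing $\#\N_5$ by subtraction from $q^3+q^2+q+1$ and matching it with $|G_q|/|\mathrm{Stab}_{G_q}(\pi_0)|$ for a fixed $\pi_0 \in \N_5$. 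For part (iv), two distinct chords meeting at an off-$\C$ point would span a plane containing (over $\F_{q^2}$ if necessary) at least four points of $\C$, contradicting the defining property of $\C$; a count summing $q-1$ points per real chord, $q$ per tangent, and $q+1$ per imaginary chord then recovers all $q^2(q+1)$ off-$\C$ points.

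For parts (ii) and (iii), I would classify off-$\C$ points first by chord type (tangent/real/imaginary, via part (iv)) and then by the number of osculating planes through them. A point $X = \P(x_0,x_1,x_2,x_3)$ lies on $\pi_\text{osc}(t)$ iff $t$ is a root of $-x_0 t^3 + 3x_1 t^2 - 3x_2 t + x_3 = 0$ (with $t = \infty$ corresponding to $x_0 = 0$). For $q \not\equiv 0 \pmod 3$ this cubic is genuinely cubic and has $0$, $1$, or $3$ roots in $\F_q$, producing the orbits $\M_5,\M_4,\M_3$ of $0_\Gamma$-, $1_\Gamma$-, and $3_\Gamma$-points; the identifications \eqref{eq2_=1_orbit_point}--\eqref{eq2_=1b_orbit_point} with chord type then follow by matching the $\F_q$-splitting of the cubic to the $\F_q$-splitting of the quadratic cutting the chord out of $\C$ over $\PG(3,q^2)$. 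The orbit $\M_2$ of T-points is handled separately by part (iv), and the cardinalities in \eqref{eq2_point_orbits_gen} all fall out of orbit--stabilizer.

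The hard part will be the characteristic-$3$ case of (iii). Here the cubic collapses to $-x_0 t^3 + x_3 = 0$ since $3 = 0$ in $\F_q$; because the Frobenius $t \mapsto t^3$ is a bijection on $\F_q$, generic off-$\C$ points lie on exactly one osculating plane, whereas the $q+1$ points with $x_0 = x_3 = 0$ --- namely the axis of the pencil --- lie on all $q+1$. I would then redo the orbit analysis by hand: the axis points form $\M_2$; the axis meets every tangent in exactly one point, so the $q^2 - 1$ off-axis tangent points form $\M_3$ (the TO-points); and the remaining off-$\C$, off-axis points split into real- and imaginary-chord orbits of equal size $(q^3-q)/2$, matching \eqref{eq2_point_orbits_j=0}. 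The delicate bookkeeping --- identifying the TO-points as a single $G_q$-orbit and confirming the axis--tangent intersection count --- is what makes this case the main obstacle.
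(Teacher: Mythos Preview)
The paper does not supply its own proof of this theorem: it is quoted verbatim as a known result from Hirschfeld's monograph \cite[Ch.~21]{Hirs_PG3q} and is used purely as background in Section~\ref{subsec_twis_cub}. There is therefore nothing in the paper to compare your proposal against.

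That said, your outline is the standard route to this classical result and is essentially sound. The identification $G_q\cong\mathrm{PGL}(2,q)$ and sharp $3$-transitivity on $\C$ do most of the work for the plane orbits and for part~(iv); the cubic $-x_0t^3+3x_1t^2-3x_2t+x_3=0$ governing incidence with osculating planes is exactly the right invariant for the point orbits. Two places deserve a bit more care than your sketch gives them. First, for $q\not\equiv0\pmod3$ the cubic can also have a repeated root, and you need to argue that the double-root locus is precisely the union of tangents (so that T-points are cleanly separated from the $\mu_\Gamma$-points before you invoke the $0/1/3$ trichotomy). Second, the identifications \eqref{eq2_=1_orbit_point}--\eqref{eq2_=1b_orbit_point} are usually obtained via the null polarity $\C\leftrightarrow\Gamma$ available when $3\nmid q$, which swaps chord type with osculating-plane count; your phrase ``matching the $\F_q$-splitting of the cubic to the $\F_q$-splitting of the quadratic'' is gesturing at this but would need to be made explicit. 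Your characteristic-$3$ analysis (collapse to $-x_0t^3+x_3=0$, axis $x_0=x_3=0$, Frobenius bijectivity) is correct and is indeed where the bookkeeping is heaviest.
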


Note that the property ``every point off $\C$ lies on exactly one chord of $\C$'' is important for further researches in this paper, in particular, for Theorem \ref{th3_3_coset}.

 \subsection{The number of $3_\C$-planes through points and lines of $\PG(3,q)$}\label{subsec_incid}
\begin{lemma}\label{lem2_3 2} \emph{\cite[Lem. 4.3]{BDMP_TwistCubArX}}
 The number of\/ $3_\C$-planes and $2_\C$-planes through a real chord of $\C$ is equal to $q-1$ and $2$, respectively.
\end{lemma}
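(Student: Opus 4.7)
The plan is to count the planes through a real chord $\ell$ of $\C$ in two ways, using only two classical facts: (a) in $\PG(3,q)$, every line lies in exactly $q+1$ planes; (b) by the defining property of the twisted cubic (stated in Section~\ref{subsec_twis_cub}), no four points of $\C$ are coplanar.

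First I would fix the real chord $\ell$ and let $P_1,P_2\in\C$ denote its two real points on $\C$. Any plane $\pi$ containing $\ell$ automatically contains $P_1$ and $P_2$, so it is a $d_\C$-plane with $d\ge 2$; by (b) above, $d\le 3$. Hence each of the $q+1$ planes through $\ell$ is either a $2_\C$-plane or a $3_\C$-plane. Denoting by $a$ and $b$ the number of $3_\C$- and $2_\C$-planes through $\ell$ respectively, we get the first relation $a+b=q+1$.

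Next I would establish the key bijection between $3_\C$-planes through $\ell$ and the points of $\C\setminus\{P_1,P_2\}$. In one direction, a $3_\C$-plane through $\ell$ contains one further point of $\C$ besides $P_1,P_2$; in the other direction, any point $P\in\C\setminus\{P_1,P_2\}$ lies off $\ell$ (else $\ell$ would be a chord through three collinear points of $\C$, contradicting (b) applied to any fourth point together with these three), hence $P$ together with $\ell$ spans a unique plane, which is a $3_\C$-plane through $\ell$. Distinct such $P$ give distinct planes because each plane contains only one extra $\C$-point. Since $\#(\C\setminus\{P_1,P_2\})=q-1$, we conclude $a=q-1$ and then $b=(q+1)-(q-1)=2$.

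The whole argument is essentially one double count, so there is no real obstacle; the only point to be slightly careful about is verifying that no third $\C$-point lies on $\ell$, which follows immediately from the ``no four coplanar'' property of $\C$.
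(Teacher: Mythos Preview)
Your argument is correct. The paper does not supply its own proof of this lemma; it simply cites \cite[Lem.~4.3]{BDMP_TwistCubArX}. Your self-contained double count is exactly the standard proof and would be the natural one to give: the $q+1$ planes through the real chord $\ell$ each meet $\C$ in $2$ or $3$ points by the arc property, and the map sending a third point $P\in\C\setminus\{P_1,P_2\}$ to the plane $\langle\ell,P\rangle$ is a bijection onto the $3_\C$-planes through $\ell$, giving $a=q-1$ and $b=2$.

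The only place to tighten the wording is your justification that no third point of $\C$ lies on $\ell$. You deduce ``no three points of $\C$ collinear'' from ``no four coplanar'' by adjoining a fourth point of $\C$; this is valid because $q+1\ge 4$ (indeed the paper assumes $q\ge 5$ throughout). It might be cleaner simply to invoke the definition of an $n$-arc in $\PG(3,q)$: any $4$ points are in general position, which immediately rules out three collinear points. Either way the step is sound.
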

\begin{lemma}\emph{\cite[Lem. 4.12]{BDMP_TwistCubArX}}
Through every point of the orbit $\M_j$ we have the same number of planes from the orbit $\N_i$.
\end{lemma}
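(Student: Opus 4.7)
The plan is to exploit the defining transitivity of the group action. By definition the orbits $\M_j$ of points (respectively $\N_i$ of planes) are the orbits of $\PG(3,q)$ under the action of $G_q$, so $G_q$ acts transitively on each $\M_j$ and setwise-stabilises each $\N_i$. Since projectivities preserve the incidence relation between points and planes, this will immediately give the claim.

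Concretely, I would argue as follows. Let $P,P'\in\M_j$ be two points of the same orbit. By transitivity of $G_q$ on $\M_j$, there is some $g\in G_q$ with $g(P)=P'$. Define
\begin{equation*}
 \Phi_i(P):=\{\,\boldsymbol{\pi}\in\N_i \mid P\in\boldsymbol{\pi}\,\},\qquad \Phi_i(P'):=\{\,\boldsymbol{\pi}\in\N_i \mid P'\in\boldsymbol{\pi}\,\}.
\end{equation*}
The projectivity $g$ induces a bijection on the set of all planes of $\PG(3,q)$. Because $\N_i$ is a $G_q$-orbit, $g$ maps $\N_i$ onto itself; because $g$ preserves point-plane incidence, $P\in\boldsymbol{\pi}$ if and only if $P'=g(P)\in g(\boldsymbol{\pi})$. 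Hence the restriction $\boldsymbol{\pi}\mapsto g(\boldsymbol{\pi})$ is a bijection $\Phi_i(P)\to\Phi_i(P')$, whence $\#\Phi_i(P)=\#\Phi_i(P')$.

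Since this holds for every pair $P,P'\in\M_j$ and for every plane orbit $\N_i$, the number of planes of $\N_i$ through a point of $\M_j$ depends only on the pair of indices $(i,j)$, as claimed.

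There is no real obstacle here: the statement is essentially a tautology about group actions preserving a bi-invariant incidence relation, and the only thing that must be checked is that $G_q$, being a group of projectivities fixing $\C$, does preserve both the point orbit structure (trivially) and the plane orbit structure (also trivially, since the $\N_i$ are themselves $G_q$-orbits by Theorem~\ref{th2_HirsCor4,5}(i)). The substantive work in the cited paper \cite{BDMP_TwistCubArX} is the subsequent computation of these common incidence numbers, not the existence statement proved here.
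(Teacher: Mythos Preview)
Your argument is correct and is exactly the standard one: transitivity of $G_q$ on $\M_j$, setwise invariance of $\N_i$ under $G_q$, and preservation of incidence by projectivities together yield a bijection between the pencils $\Phi_i(P)$ and $\Phi_i(P')$.

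Note that the present paper does not supply its own proof of this lemma; it simply quotes it from \cite[Lem.~4.12]{BDMP_TwistCubArX}. The proof there is the same group-action argument you give, so there is nothing to contrast: your proposal matches both the cited source and the only natural route to the statement.
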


Throughout the paper, for $q\equiv \xi\pmod3$, let $r_{3j}^{(\xi)}$ be the number of $3_\C$-planes from the orbit $\N_3$ through every point of the orbit $\M_j$.

 \begin{theorem}\emph{\cite[Th. 3.1, Tab. 1, 2]{BDMP_TwistCubArX}}\label{Th2_rij}
Let  $q\equiv \xi\pmod3$. Let  $r_{3j}^{(\xi)}$ be as above. The following holds:
\begin{align*}
&\textbf{\emph{(i)}}~~r_{31}^{(1)}=r_{31}^{(-1)}=\frac{1}{2}(q^2-q),~r_{32}^{(1)}=r_{32}^{(-1)}=\frac{1}{6}(q^2-3q+2);\displaybreak[3]\\
&r_{33}^{(1)}=\frac{1}{6}(q^2+q+4),~r_{34}^{(1)}=\frac{1}{6}(q^2-q),~r_{35}^{(1)}=\frac{1}{6}(q^2+q-2);\displaybreak[3]\\
&r_{33}^{(-1)}=\frac{1}{6}(q^2- q+4),~r_{34}^{(-1)}=\frac{1}{6}(q^2+q),~r_{35}^{(-1)}=\frac{1}{6}(q^2-q-2).\displaybreak[3]\\
&\textbf{\emph{(ii)}}~~r_{31}^{(0)}=\frac{1}{2}(q^2-q),~r_{32}^{(0)}=r_{35}^{(0)}=\frac{1}{6}(q^2-q),~r_{33}^{(0)}=\frac{1}{6}(q^2-3q),~r_{34}^{(0)}=\frac{1}{6}(q^2+q).
\end{align*}
 \end{theorem}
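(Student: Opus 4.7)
The plan is to prove each entry of Theorem~\ref{Th2_rij} by double counting incidences between the point orbits $\M_j$ and the plane orbit $\N_3$, combined with an explicit coordinate model of $3_\C$-planes. First I would establish $\#\N_3 = \binom{q+1}{3} = (q^3-q)/6$: three $\C$-points are never collinear (a chord carries at most two $\C$-points) and, since no four $\C$-points are coplanar, every three distinct $\C$-points span a unique plane which, by construction, is a $3_\C$-plane. The case $j=1$ is then immediate: a $\C$-point lies in a $3_\C$-plane iff it is one of the three marked $\C$-points of that plane, giving $r_{31}^{(\xi)}=\binom{q}{2}=(q^2-q)/2$ in every characteristic.

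For $j\ge 2$, I would pass to coordinates using $P(t)=(1,t,t^2,t^3)$. Expanding the $4\times 4$ Vandermonde-type determinant, the plane through $P(t_1),P(t_2),P(t_3)$ has equation
\[
x_3-e_1 x_2+e_2 x_1-e_3 x_0=0,
\]
where $e_k=e_k(t_1,t_2,t_3)$ is the $k$-th elementary symmetric function, and it is a $3_\C$-plane iff the monic cubic $F(t)=t^3-e_1 t^2+e_2 t-e_3$ has three distinct roots in $\F_q$. Consequently, for a point $P=(a_0,a_1,a_2,a_3)$ off $\C$, the $3_\C$-planes through $P$ correspond bijectively to fully-split cubics $F$ whose coefficient triple $(e_1,e_2,e_3)$ lies on the affine plane $a_0 e_3-a_1 e_2+a_2 e_1=a_3$ in $\F_q^3$; a symmetric coordinate chart handles planes containing $P(\infty)$.

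By $G_q$-transitivity on each orbit the count is constant, so I would fix a canonical representative in each of $\M_2,\M_3,\M_4,\M_5$ (using the descriptions in Notation~\ref{notation_1} and Theorem~\ref{th2_HirsCor4,5}) and enumerate fully-split cubics in the resulting two-dimensional affine family of coefficient space. The trichotomy $\xi\in\{1,-1,0\}$ comes from Dickson's classification of cubics over $\F_q$, which controls how many cubics in an affine plane are fully split; the subcase $q\equiv 0\pmod 3$ requires separate treatment because $\Gamma$ degenerates to a pencil, reshaping the orbits $\M_2$ and $\M_3$. Two global linear relations serve as consistency checks: the pointwise incidence balance
\[
\sum_{j=1}^{5} r_{3j}^{(\xi)}\,\#\M_j = (q^2+q+1)\,\#\N_3,
\]
and the chord-incidence identity $\binom{q+1}{2}(q-1)=3\,\#\N_3$, obtained from Lemma~\ref{lem2_3 2} together with the facts that every real chord carries $q-1$ off-$\C$ points and every $3_\C$-plane contains three real chords.

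The main obstacle is the uniform execution of the case analysis across $\M_2,\M_3,\M_4,\M_5$ for each residue $q\pmod 3$: one has to translate each geometric orbit description (T-point, RC-point, IC-point, $\mu_\Gamma$-point, TO-point, $(q+1)_\Gamma$-point) into an explicit coordinate condition on $(a_0,a_1,a_2,a_3)$ and then count fully-split cubics in the prescribed affine plane of coefficient space. The number of orbits is small, which keeps the bookkeeping tractable, but the characteristic-$3$ case requires extra care because of the degeneration of $\Gamma$, and the discrimination between $q\equiv 1$ and $q\equiv -1\pmod 3$ (visible in the residues $+4$ versus $-2$ appearing in $r_{33}$, $r_{35}$) tracks exactly the Galois-theoretic splitting behaviour of cubics over $\F_q$.
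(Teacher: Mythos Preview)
The paper does not prove Theorem~\ref{Th2_rij}; it is quoted verbatim from \cite[Th.~3.1, Tab.~1, 2]{BDMP_TwistCubArX} as a preliminary result, so there is no ``paper's own proof'' to compare against. What you have written is an outline for an independent proof of the cited result.

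As an outline, your approach is sound and in fact close in spirit to how such incidence numbers are typically derived: parametrising planes not through $P(\infty)$ by $(e_1,e_2,e_3)$, identifying $3_\C$-planes with monic cubics that split into three distinct linear factors, and then, for each orbit representative, reducing the question to counting fully-split cubics subject to one affine constraint on $(e_1,e_2,e_3)$. The global double-count $\sum_j r_{3j}^{(\xi)}\#\M_j=(q^2+q+1)\#\N_3$ is a correct and useful check. Your remark that the $\xi$-trichotomy is governed by the splitting behaviour of cubics over $\F_q$ is also on point.

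That said, be aware that the proposal is a plan rather than a proof: the actual work---fixing explicit orbit representatives, writing down the resulting one-parameter or two-parameter families of cubics, and counting how many are fully split in each residue class---is asserted but not carried out. In practice this is where most of the effort in \cite{BDMP_TwistCubArX} lies, and the characteristic-$3$ case (where the $\M_2,\M_3$ orbits change nature and $\Gamma$ becomes a pencil) genuinely requires separate bookkeeping. None of this invalidates your strategy, but you should not underestimate the length of the case analysis needed to turn the plan into a complete argument.
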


\begin{remark}\label{rem2:r_ij}
 Let  $q\equiv \xi\pmod3$.

  The points off $\C$ lying on a real chord (RC-points) form the orbits $\M_3$ and $\M_5$ for $\xi=1$ or the orbit $\M_4$ for $\xi\ne1$, see Theorem \ref{th2_HirsCor4,5}(ii)(iii). By Theorem~\ref{Th2_rij}, if $\xi=1$, we have $r_{33}^{(1)}=(q^2+q+4)/6$ and $r_{35}^{(1)}=(q^2+q-2)/6$ $3_\C$-planes through every point of the orbits $\M_3$ and $\M_5$, respectively. On the other hand, if $\xi\ne1$, there are $r_{34}^{(-1)}=r_{34}^{(0)}=(q^2+q)/6$ $3_\C$-planes through every point of~$\M_4$.  The RC-points are connected with the case (3) of Proof of Theorem \ref{th3_3_coset}.

  The points off $\C$ lying on a tangent are either T-points (orbit $\M_2$ for $\xi\ne0$), or $(q+1)_\Gamma$-points (orbit $\M_2$ for $\xi=0$), or, finally,  TO-points (orbit $\M_3$ for $\xi=0$); see Theorem \ref{th2_HirsCor4,5}(ii)(iii) and Remark~\ref{rem2:afterNot1}. By Theorem~\ref{Th2_rij}, if $\xi\ne0$, we have $r_{32}^{(1)}=r_{32}^{(-1)}=(q^2-3q+2)/6$ $3_\C$-planes through every T-point whereas, if $\xi=0$,  there are $r_{32}^{(0)}=(q^2-q)/6$ and $r_{33}^{(0)}=(q^2-3q)/6$ $3_\C$-planes through every $(q+1)_\Gamma$- and TO-point, respectively; cf. the values $B_3(\V)$ for the cosets $\V^{(3)}_a$ in Theorem \ref{th3_3_coset}. For T- and TO-points, the values noted are $2,5,7,9,15$ for $q=5,7,8,9,11$, respectively; cf. values of $B_3=B_3(\V)$ in the cosets $\V^{(3)}_a$ for these $q$ in Example \ref{ex4} below.

  The points off $\C$ lying on an imaginary chord (IC-points) form either the orbits $\M_3$ and $\M_5$ for $\xi=-1$ or the orbit $\M_4$ for $\xi=1$, or, finally, the orbit $\M_5$ for $\xi=0$, see Theorem \ref{th2_HirsCor4,5}(ii)(iii).

  By Theorem~\ref{Th2_rij}, if $\xi=-1$, we have $r_{33}^{(-1)}=(q^2-q+4)/6$ and $r_{35}^{(-1)}=(q^2-q-2)/6$ $3_\C$-planes through every point of the orbits $\M_3$ and $\M_5$, respectively; cf. relations $B_3(\V)$ for the cosets $\V^{(3)}_c$ and $\V^{(3)}_b$ with $\xi=-1$ in Theorem \ref{th3_3_coset}. The values noted are $\{4,3\}$, $\{10,9\}$, and $\{19,18\}$ for $q=5,8$, and $11$, respectively; cf. values of $B_3=B_3(\V)$ in the cosets $\{\V^{(3)}_c,\V^{(3)}_b\}$ for these $q$ in Example~\ref{ex4}.

  On the other hand, if $\xi\ne-1$, there are $r_{34}^{(1)}=r_{35}^{(0)}=(q^2-q)/6$ $3_\C$-planes through every point of~$\M_4$ (for $\xi=1$) or $\M_5$ (for $\xi=0$); cf. relations $B_3(\V)$ for the cosets $\V^{(3)}_b$ with $\xi\ne-1$ in Theorem \ref{th3_3_coset}. The value noted is $7,12$ for $q=7,9$, respectively; cf. values of $B_3=B_3(\V)$ in the cosets $\V^{(3)}_b$ for these $q$ in Example \ref{ex4}.

 The T- and IC-points are connected with the case (4) of Proof of Theorem \ref{th3_3_coset}.
\end{remark}


\subsection{Cosets of a linear code}\label{subsec_cosets}
We give a few known definitions and properties connected with cosets of linear codes, see e.g. \cite{Blahut,HufPless,MWS,HandbookCodes,Roth} and the references therein.

We consider a coset $\V$ of an $[n,k,d]_{q}R$ code $\CC$ in the form \eqref{eq1_coset}. We have $\#\V=\#\CC=q^k$. One can take as $\v$ any vector of $\V$. So, there are $\#\V=q^k$ formally distinct representations of the form~\eqref{eq1_coset}; they all give the same coset $\V$. If $\v\in \CC$, we have $\V=\CC$. The distinct cosets of~$\CC$ partition $\F_{q}^{n}$ into $q^{n-k}$ sets of size $q^k$.
\begin{notation}\label{notation_coset}
For an $[n,k,d]_{q}R$ code $\CC$ and its coset $\V$ of the form \eqref{eq1_coset}, the following notation is used:
\begin{align*}
&t=\left\lfloor\frac{d-1}{2}\right\rfloor&&\text{the number of correctable errors};\displaybreak[3]\\
&wt(\x)&&\text{the Hamming weight of a vector  $\x\in\F_q^n$;}\displaybreak[3]\\
&A_w(\CC)&&\text{the number of weight $w$ codewords of $\CC$;}\displaybreak[3]\\
&S(\CC)&&\text{the set of nonzero weights in $\CC$; }~S(\CC)=\{w>0|A_w(\CC)\ne0\};\displaybreak[3]\\
&s(\CC)=\#S(\CC)&&\text{the number of nonzero weights in }\CC;\displaybreak[3]\\
&\CC^\bot&&\text{the $[n,n-k,d^\bot]_qR^\bot$ code dual to $\CC$;}\displaybreak[3]\\
&\v+\CC&&\text{the coset of $\CC$ of the form \eqref{eq1_coset}};\displaybreak[3]\\
&B_w(\V)&&\text{the number of weight $w$ vectors in }\V;\displaybreak[3]\\
&\text{the weight of a coset}&&\text{the smallest Hamming weight of any vector in the coset;}\displaybreak[3]\\
&\V^{(W)}&&\text{a coset of weight }W;~~\text{ if $w<W$ then } A_w(\V^{(W)})=0;\displaybreak[3]\\
&\text{a coset leader}&&\text{a vector in the coset having the smallest Hamming weight;}\displaybreak[3]\\
&H(\CC)&&\text{an $(n-k)\times n$ parity check matrix of }\CC;\displaybreak[3]\\
&\x^{tr}&&\text{the transposed vector }\x;\displaybreak[3]\\
&H(\CC)\x^{tr}&&\text{the \emph{syndrome} of a vector }\x\in\F_q^n,~~H(\CC)\x^{tr}\in \F_q^{n-k};\displaybreak[3]\\
&\text{a coset syndrome}&&\text{the syndrome of any vector of the coset}.
\end{align*}
\end{notation}

In cosets of weight $> t$, a  vector of the minimal weight is not necessarily unique. Cosets of weight $\leq t$ have a unique leader.

 The code $\CC$  is the coset of weight zero.  The leader of $\CC$ is the zero vector of $ \F_q^{n}$.

All vectors in a code coset have the same syndrome; it is called the \emph{coset syndrome}. Thus, there is a one-to-one correspondence between cosets and syndromes.
 The syndrome of $\CC$ is the zero vector of $ \F_q^{n-k}$.

A linear $[n,k]_{q}$ code has covering radius $R$ if
every vector of $\F_{q}^{n-k}$ considered as a column is equal to a linear combination of at most $R$ columns
of a parity check matrix of the code, and $R$ is the smallest value with
such property.

 The covering radius $R$ of the code $\CC$ is equal to the maximum weight of a coset of $\CC$.

 \begin{theorem}\label{th2_HP_coset}
\begin{description}
  \item[(i)] \emph{\cite[Lem. 7.5.1]{HufPless}}  For a code $\CC$, the weight distributions of all cosets $\alpha\v+\CC$, $\alpha\in\F_q^*$, are identical.
  \item[(ii)] \emph{\cite[Th. 6.20]{MWS}}, \emph{\cite[Th. 7.5.2]{HufPless}}, \emph{\cite[Th. 10.10]{HandbookCodes}}
  For a code $\CC$, the weight distribution of any coset of weight $<s(\CC^\bot)$ is uniquely determined if, in the coset, the numbers
of vectors of weights $1,2,\ldots,s(\CC^\bot)-1$ are known.
\end{description}
\end{theorem}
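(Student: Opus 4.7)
This is immediate: scalar multiplication by $\alpha\in\F_q^*$ is an $\F_q$-linear bijection of $\F_q^n$ that preserves the support of every vector, hence preserves Hamming weight. Linearity of $\CC$ gives $\alpha\CC=\CC$, so the map $\x\mapsto\alpha\x$ restricts to a weight-preserving bijection between $\v+\CC$ and $\alpha\v+\CC$, yielding $B_w(\v+\CC)=B_w(\alpha\v+\CC)$ for every $w$.

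\textbf{Part (ii).} My plan is to invoke the coset form of the MacWilliams identity. Fixing a nontrivial additive character $\chi$ of $\F_q$, a routine character-sum calculation gives, for any coset $\V=\v+\CC$,
\begin{align*}
W_\V(x,y)\;=\;\frac{1}{\#\CC^\bot}\sum_{j=0}^{n}\sigma_j(\V)\,(x+(q-1)y)^{n-j}(x-y)^j,
\end{align*}
where $W_\V(x,y)=\sum_{w}B_w(\V)\,x^{n-w}y^w$ and $\sigma_j(\V)=\sum_{\mathbf{u}\in\CC^\bot,\,\mathrm{wt}(\mathbf{u})=j}\chi(\mathbf{u}\cdot\v)$. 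Because $\chi(\mathbf{u}\cdot\c)=1$ whenever $\mathbf{u}\in\CC^\bot$ and $\c\in\CC$, each $\sigma_j(\V)$ depends only on the coset $\V$. One has $\sigma_0(\V)=1$, and $\sigma_j(\V)=0$ as soon as $A_j(\CC^\bot)=0$, so only indices $j\in\{0\}\cup S(\CC^\bot)$ contribute. Hence $W_\V(x,y)$ is fully determined by the $s(\CC^\bot)$ free scalars $\{\sigma_j(\V):j\in S(\CC^\bot)\}$; extracting the coefficient of $y^w$ writes each $B_w(\V)$ as a linear function of these unknowns whose coefficients are the Krawtchouk numbers $K_w(j;n,q)$.

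Under the hypothesis that the coset has weight less than $s(\CC^\bot)$, the value $B_0(\V)\in\{0,1\}$ can be read off from the given data: $B_0(\V)=1$ iff all given $B_w(\V)$ with $1\le w<s(\CC^\bot)$ vanish, and $B_0(\V)=0$ otherwise. Thus the knowledge of $B_0(\V),B_1(\V),\ldots,B_{s(\CC^\bot)-1}(\V)$ produces a square system of $s(\CC^\bot)$ linear equations in the $s(\CC^\bot)$ unknowns $\sigma_j(\V)$. \emph{The main obstacle} is to show that the $s(\CC^\bot)\times s(\CC^\bot)$ coefficient matrix $\bigl(K_w(j;n,q)\bigr)$, with rows indexed by $w\in\{0,\ldots,s(\CC^\bot)-1\}$ and columns by $j\in S(\CC^\bot)$, is non-singular. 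Since $K_w(\,\cdot\,;n,q)$ is a polynomial of degree exactly $w$ in its second argument, this matrix is a generalized Vandermonde on the pairwise distinct nodes of $S(\CC^\bot)$; changing basis from $\{K_w\}_{w<s(\CC^\bot)}$ to $\{1,j,\ldots,j^{s(\CC^\bot)-1}\}$ reduces its determinant to a classical Vandermonde, which is nonzero. Inverting the system recovers the $\sigma_j(\V)$, and the same formulas then compute every remaining $B_w(\V)$, proving uniqueness.
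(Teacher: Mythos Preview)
The paper does not supply its own proof of Theorem~\ref{th2_HP_coset}; the result is quoted from \cite{HufPless,MWS,HandbookCodes} without argument, so there is nothing in the paper to compare against directly. Your proof of part~(i) is correct and is the standard one-line argument. Your approach to part~(ii)---express the coset enumerator via the MacWilliams transform in terms of the character sums $\sigma_j(\V)$, observe that only the $s(\CC^\bot)$ indices $j\in S(\CC^\bot)$ can contribute, and then invert an $s\times s$ Krawtchouk--Vandermonde system---is exactly the classical Delsarte/MacWilliams route taken in the cited references, and your nonsingularity argument is correct.

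There is, however, one genuine slip. Your rule ``$B_0(\V)=1$ iff all $B_w(\V)$ with $1\le w<s(\CC^\bot)$ vanish'' is false in the direction~$\Rightarrow$: if $\V=\CC$ and $d(\CC)<s(\CC^\bot)$, then $B_0(\V)=1$ yet $B_{d}(\V)=A_d(\CC)\ne 0$. A concrete instance is the code $\CC=\{0,e_1\}\subset\F_2^n$ with $n\ge 3$: here $d=1$ while $\CC^\bot=\{x:x_1=0\}$ has $s(\CC^\bot)=n-1$, and $\V=\CC$ has $B_0=1$, $B_1=1$. Your rule would assign $B_0=0$, feed a wrong value into the linear system, and produce incorrect $\sigma_j$'s. (The implication~$\Leftarrow$ you state is fine: if $B_1=\cdots=B_{s-1}=0$ and the coset weight is $<s$, the coset must contain $0$.) The fix is immediate and does not disturb the rest of your argument: a coset is always presented as $\v+\CC$, so one knows in advance whether $\v\in\CC$, i.e.\ $B_0(\V)$ is part of the given data, not something to be deduced. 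With $B_0$ included alongside $B_1,\ldots,B_{s-1}$, your $s\times s$ system and its Vandermonde inversion go through verbatim.
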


\begin{theorem}\label{th2_Bon} \emph{\cite{Bonneau1990}}
Let $\CC$ be an $[n,k,d=n-k+1]_q$ MDS code. Let $\V$ be one of its cosets.  Assume that all the values of $B_v(\V)$ with $0\le v\le d-2$ are known. Then, for $w\ge d-1$, the weight distribution of\/ $\V$ is as follows:
\begin{align}
 B_{w}(\V)=\binom{n}{w}\sum_{j=0}^{w-d+1}(-1)^j\binom{w}{j}q^{w-d+1-j}+\BB_{w}(\V),~w\ge d-1,\label{eq2_wd_cosetBon}
\end{align}
where
\begin{align}
&\BB_{w}(\V)=\sum_{j=w-d+2}^w(-1)^j\sum_{v=0}^{w-j}\binom{j+n-w}{j}\binom{n-v}{w-j-v}B_v(\V).\label{eq2_Bw2Bon}
\end{align}
\end{theorem}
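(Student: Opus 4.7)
The plan is to prove the formula by counting vectors of the coset $\V$ that are supported in a fixed subset of coordinates, exploiting the MDS structure, and then recovering $B_w(\V)$ through Möbius inversion on the subset lattice. Concretely, for a subset $S\subseteq\{1,\ldots,n\}$ let $N(S)$ denote the number of vectors of $\V$ whose support is contained in $S$. If $H$ is a parity check matrix of $\CC$ and $\sigma\in\F_q^{n-k}$ is the syndrome of $\V$, then $N(S)$ equals the number of $z\in\F_q^S$ solving $H_S z^{tr}=\sigma$, where $H_S$ is the submatrix of $H$ with columns indexed by $S$.

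The first key step is to invoke the MDS property in the form that every $n-k=d-1$ columns of $H$ are linearly independent. Consequently, for $|S|\ge d-1$ the matrix $H_S$ has full row rank $n-k$, so the linear system $H_S z^{tr}=\sigma$ is consistent and has exactly $q^{|S|-d+1}$ solutions. This gives the uniform value
\begin{equation*}
N(S)=q^{|S|-d+1}\qquad\text{for all }|S|\ge d-1,
\end{equation*}
which crucially does \emph{not} depend on $\V$. For $|S|<d-1$ the value of $N(S)$ is $0$ or $1$ (any two such vectors of $\V$ would differ by a codeword of weight below~$d$); the aggregate $\sum_{|S|=v}N(S)$ is then recovered by counting incidences: each $y\in\V$ of weight $u\le v$ contributes $\binom{n-u}{v-u}$, yielding $\sum_{|S|=v}N(S)=\sum_{u=0}^{v}\binom{n-u}{v-u}B_u(\V)$.

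Next I apply inclusion--exclusion on the support. For a fixed $T$ with $|T|=w$,
\begin{equation*}
\#\{y\in\V:\mathrm{supp}(y)=T\}=\sum_{S\subseteq T}(-1)^{|T|-|S|}N(S).
\end{equation*}
Summing over all $T$ of size $w$ and reindexing by $j=|T|-|S|$, I exchange the sums using the identity $\#\{T:|T|=w,\,S\subseteq T\}=\binom{n-(w-j)}{j}=\binom{n-w+j}{j}$, which gives
\begin{equation*}
B_w(\V)=\sum_{j=0}^{w}(-1)^{j}\binom{n-w+j}{j}\sum_{|S|=w-j}N(S).
\end{equation*}
I then split this sum at $j=w-d+1$: for $j\le w-d+1$ (so $|S|\ge d-1$) I plug in the constant value $q^{w-j-d+1}$ together with $\binom{n}{w-j}$ for the number of subsets; for $j\ge w-d+2$ I substitute the incidence identity involving $B_v(\V)$ with $v\le w-j\le d-2$. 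This last range is precisely where the hypothesis of the theorem grants access to the numbers $B_v(\V)$, which is the conceptual reason the first $d-2$ coset components determine the rest.

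The remaining step is bookkeeping: one checks the binomial identity $\binom{n-w+j}{j}\binom{n}{w-j}=\binom{n}{w}\binom{w}{j}$ (both sides equal the multinomial coefficient $\frac{n!}{j!\,(w-j)!\,(n-w)!}$) to rewrite the ``known'' part of the sum as $\binom{n}{w}\sum_{j=0}^{w-d+1}(-1)^j\binom{w}{j}q^{w-d+1-j}$, matching \eqref{eq2_wd_cosetBon}, while the ``unknown'' part is already in the form~\eqref{eq2_Bw2Bon}. The only genuinely delicate point is the first step, namely pinning down $N(S)=q^{|S|-d+1}$ uniformly for every $|S|\ge d-1$; once the MDS property supplies this rigidity, the rest is Möbius inversion plus a binomial identity, and no further structure of $\V$ enters.
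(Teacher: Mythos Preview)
Your argument is correct. The paper does not prove this theorem at all; it is quoted from \cite{Bonneau1990} as a preliminary result, so there is nothing to compare against in the paper itself. Your proof is in fact the standard one (and essentially Bonneau's): use the MDS property to pin down $\sum_{|S|=m}N(S)=\binom{n}{m}q^{m-d+1}$ for $m\ge d-1$, then invert on the subset lattice and split the resulting sum at $j=w-d+1$. All the steps check out, including the binomial identity $\binom{n-w+j}{j}\binom{n}{w-j}=\binom{n}{w}\binom{w}{j}$ used to match the first term of~\eqref{eq2_wd_cosetBon}.
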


As far as it is known to the authors, the weight distribution of cosets of GDRS codes, including the number of the cosets with distinct weight distributions,  is an open combinatorial problem.

\section{The distribution of weight 3 vectors in the cosets of the code $\CC_\C$. Classification of the cosets}\label{sec_distrib3}
\begin{notation}
Let $\CC_\C$  be the $[q+1,q-3,5]_q3$  GDRS code \emph{associated} with the twisted cubic $\C$ of \eqref{eq2_cubic} so that columns of its parity check matrix  are points of $\C$ in homogeneous coordinates.
\end{notation}

In $PG(3,q)$, every point off $\C$ lies in a few $3_\C$-planes, see Section \ref{subsec_incid}; therefore, the covering radius of $\CC_\C$ is equal to 3 and the cosets of $\CC_\C$ have weight $\le3$.

\begin{remark}\label{rem3}
Every point $P$ of $\PG(3,q)$ gives rise to $q-1$ nonzero syndromes of cosets of~$\CC_\C$; the syndromes can be generated by multiplying $P$ (in homogeneous coordinates) by elements of $\F_q^*$. Points of each orbit $\M_j$ generate $\#\M_j\cdot(q-1)$ syndromes corresponding to a set of cosets with the same weight distribution.

Every $\C$-point gives rise to $q-1$ syndromes of weight $1$ cosets. Every RC-point generates $q-1$ syndromes of weight $2$ cosets. Finally, every point on a tangent (i.e. either T-, or $(q+1)_\Gamma$-, or TO-point) as well as every IC-point gives rise to $q-1$ syndromes of weight $3$ cosets. These cases are exhaustive for our goals as every point off $\C$ lies on exactly one chord of $\C$.
\end{remark}

\begin{theorem}\label{th3_3_coset}
 Let $q\equiv\xi\pmod3$.
The distribution of weight $3$ vectors in all the cosets of the code $\CC_\C$ is given by Table \emph{\ref{tab1}}.
\begin{table*}[htbp]
\centering
   \caption{The distribution of weight $3$ words in the cosets of the $[q+1,q-3,5]_q3$  GDRS code $\CC_\C$ associated with the twisted cubic $\C$ of \eqref{eq2_cubic}, $q\equiv\xi\pmod3$. $\{\text{RC-p}\}=\{\text{RC-points}\}$, $\{\text{IC-p}\}=\{\text{IC-points}\}$}
   $
   \begin{array}{@{}c|@{}c@{}|@{}c@{}|c@{}|c@{}|c@{}|@{}c@{\,}|@{}c@{}}   \hline
&  & \text{Coset}& \text{Coset}&\text{The number }&\text{The number}&\text{Orbits}&\\
  \text{no.} &   \xi &  \text{leader}&\text{nota-}&B_3(\V)\text{ of weight}&\text{of cosets of}&\text{generating}&B_3(\V)\\
    &   &  \text{weight}&\text{tion}&\text{3 vectors}&\text{the given type}&\text{coset syndromes}&\\
    &&&&\text{in a coset}&&\\\hline
  1&   \text{any}&0&\CC_\C&0&1&\\ \hline
  2&   \text{any}&1&\V^{(1)}&0&(q+1)(q-1)&\M_1=\{\C\text{-points}\}\\ \hline
3&     1&2&\V^{(2)}_a&\frac{1}{6}(q^2-5q+4)\vphantom{H_{H_{H_H}}}&\frac{1}{3}(q^3-q)(q-1)\vphantom{H^{H^{H^H}}}&\M_5=\{\text{RC-p}\}\setminus\M_3&r_{35}^{(1)}-(q-1)\\
4&     1&2&\V^{(2)}_b&\frac{1}{6}(q^2-5q+10)\vphantom{H_{H_{H_H}}}&\frac{1}{6}(q^3-q)(q-1)\vphantom{H^{H^{H^H}}}&\M_3=\{\text{RC-p}\}\setminus\M_5&r_{33}^{(1)}-(q-1)\\
5&1&3&\V^{(3)}_a&\frac{1}{6}(q^2-3q+2)&(q^2+q)(q-1)\vphantom{H^{H^{H^H}}}&\M_2=\{\text{T-points}\}&r_{32}^{(1)}\\
6&1&3&\V^{(3)}_b&\frac{1}{6}(q^2-q)\vphantom{H_{H_{H_H}}}&\frac{1}{2}(q^3-q)(q-1)\vphantom{H^{H^{H^H}}}&\M_4=\{\text{IC-points}\}&r_{34}^{(1)}\\\hline
 7&    -1&2&\V^{(2)}&\frac{1}{6}(q^2-5q+6)\vphantom{H_{H_{H_H}}}&\frac{1}{2}(q^3-q)(q-1)\vphantom{H^{H^{H^H}}}&\M_4=\{\text{RC-points}\}&r_{34}^{(-1)}-(q-1)\\
8&-1&3&\V^{(3)}_a&\frac{1}{6}(q^2-3q+2)&(q^2+q)(q-1)\vphantom{H^{H^{H^H}}}&\M_2=\{\text{T-points}\}&r_{32}^{(-1)}\\
9&-1&3&\V^{(3)}_b&\frac{1}{6}(q^2-q-2)\vphantom{H_{H_{H_H}}}&\frac{1}{3}(q^3-q)(q-1)\vphantom{H^{H^{H^H}}}&\M_5=\{\text{IC-p}\}\setminus\M_3&r_{35}^{(-1)}\\
10&-1&3&\V^{(3)}_c&\frac{1}{6}(q^2-q+4)\vphantom{H_{H_{H_H}}}&\frac{1}{6}(q^3-q)(q-1)\vphantom{H^{H^{H^H}}}&\M_3=\{\text{IC-p}\}\setminus\M_5&r_{33}^{(-1)}\\\hline
 11&    0&2&\V^{(2)}&\frac{1}{6}(q^2-5q+6)\vphantom{H_{H_{H_H}}}&\frac{1}{2}(q^3-q)(q-1)\vphantom{H^{H^{H^H}}}&\M_4=\{\text{RC-points}\}&r_{34}^{(0)}-(q-1)\\
12&0&3&\V^{(3)}_a&\frac{1}{6}(q^2-3q)\vphantom{H_{H_{H_H}}}&(q^2-1)(q-1)\vphantom{H^{H^{H^H}}}&\M_3=\{\text{TO-points}\}&r_{33}^{(0)}\\
13&0&3&\V^{(3)}_b&\frac{1}{6}(q^2-q)\vphantom{H_{H_{H_H}}}&\left((q+1)+\frac{q^3-q}{2}\right)(q-1)=\vphantom{H^{H^{H^H}}}&\M_2\cup\M_5=&r_{32}^{(0)}=r_{35}^{(0)}\\
&&&&\vphantom{H_{H_{H_H}}}&\frac{1}{2}(q+1)(q^2-q+2)(q-1)&\{(q+1)_\Gamma\text{-points}\}\cup&\\
&&&&&&\{\text{IC-points}\}&\\\hline
   \end{array}
   $
 \label{tab1}
 \end{table*}
\end{theorem}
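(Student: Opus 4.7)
The plan is to parametrise the cosets of $\CC_\C$ by points of $\PG(3,q)$ and to count weight-$3$ vectors in each coset geometrically, via a bijection with certain $3_\C$-planes.

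Since the columns of the chosen parity-check matrix are the $\C$-points in homogeneous coordinates, every nonzero syndrome is a scalar multiple of some representative vector of a unique projective point $P\in\PG(3,q)$. By Theorem \ref{th2_HP_coset}(i) scaling preserves weight distributions, and since $G_q$ permutes the columns of the parity-check matrix it acts on the cosets of $\CC_\C$ preserving weight distributions and compatibly with its action on $\PG(3,q)$; hence all cosets whose syndromes correspond to points in a single orbit $\M_j$ share one common weight distribution, and $\M_j$ generates $\#\M_j(q-1)$ such cosets, yielding the ``number of cosets'' column of Table \ref{tab1}. Combining Theorem \ref{th2_HirsCor4,5}(iv) with Remark \ref{rem3} classifies cosets by weight: weight $0$ (zero syndrome), weight $1$ ($P\in\M_1=\C$), weight $2$ ($P$ an RC-point), and weight $3$ ($P$ on a tangent or on an imaginary chord). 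Rows 1--2 of the table then follow, with $B_3=0$ forced by the MDS minimum distance $d=5$: two vectors in a common coset of weights $\le 1$ and $3$ would differ by a nonzero codeword of weight $\le 4$.

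The main step is an explicit formula for $B_3(\V)$ when $P$ is off $\C$. A weight-$3$ vector in a coset $\V$ with syndrome $\mathbf s$ is the data of three distinct $\C$-points $P_{i_1},P_{i_2},P_{i_3}$ together with scalars $\alpha_1,\alpha_2,\alpha_3\in\F_q^*$ satisfying $\sum_k\alpha_k\mathbf p_{i_k}=\mathbf s$, where $\mathbf p_{i_k}$ is the corresponding column of the parity-check matrix. Because no four $\C$-points are coplanar, no three of them are collinear, so $\mathbf p_{i_1},\mathbf p_{i_2},\mathbf p_{i_3}$ are linearly independent and form a basis of the $3$-dimensional subspace of $\F_q^4$ corresponding to the $3_\C$-plane $\pi=\langle P_{i_1},P_{i_2},P_{i_3}\rangle$. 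Hence $\mathbf s$ admits a (unique) decomposition in this basis exactly when $P\in\pi$, and that decomposition has all $\alpha_k\ne 0$ exactly when $P$ avoids the three sides of the triangle $P_{i_1}P_{i_2}P_{i_3}$, i.e.\ when $P$ lies on no real chord contained in $\pi$. Consequently, $B_3(\V)$ equals the number of $3_\C$-planes through $P$ minus the number of such planes in which $P$ meets a side.

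By Theorem \ref{th2_HirsCor4,5}(iv) a point off $\C$ lies on exactly one chord. If $P$ lies on a tangent or on an imaginary chord (the weight-$3$ cosets) no side of any $3_\C$-triangle passes through $P$, giving $B_3(\V)=r_{3j}^{(\xi)}$, where $j$ is the orbit index of $P$. If $P$ lies on a real chord $L$ (the weight-$2$ cosets), a $3_\C$-plane through $P$ contains a side through $P$ iff $L\subset\pi$; by Lemma \ref{lem2_3 2} there are exactly $q-1$ such planes, so $B_3(\V)=r_{3j}^{(\xi)}-(q-1)$. Inserting the values of $r_{3j}^{(\xi)}$ from Theorem \ref{Th2_rij} and the orbit cardinalities of Theorem \ref{th2_HirsCor4,5}(ii)(iii) then reproduces every row of Table \ref{tab1}. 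The main bookkeeping hurdle will be matching orbits to table rows: distinguishing $\M_3$ from $\M_5$ within the RC-points when $\xi=1$ and within the IC-points when $\xi=-1$, and grouping $\M_2$ (the $(q+1)_\Gamma$-points on tangents) with $\M_5$ (IC-points) for $\xi=0$ into the single row $\V^{(3)}_b$ via the coincidence $r_{32}^{(0)}=r_{35}^{(0)}$ of Theorem \ref{Th2_rij}(ii).
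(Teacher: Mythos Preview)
Your proposal is correct and follows essentially the same route as the paper: both identify nonzero cosets with points of $\PG(3,q)$ up to scalar, determine the coset weight from the type of chord through the point (Theorem~\ref{th2_HirsCor4,5}(iv)), set up the bijection between weight-$3$ coset vectors and $3_\C$-planes through $P$ for which $P$ avoids the triangle sides, and then read off $B_3(\V)$ as $r_{3j}^{(\xi)}$ or $r_{3j}^{(\xi)}-(q-1)$ using Theorem~\ref{Th2_rij} and Lemma~\ref{lem2_3 2}. Your write-up is in fact a bit more explicit than the paper's in two places --- the linear-independence argument showing three $\C$-points span a plane and the ``all $\alpha_k\neq 0$'' condition --- and you justify orbit-constancy via the $G_q$-action on columns rather than appealing directly to the constancy of the plane counts on orbits, but these are presentational differences rather than a different method.
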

\begin{proof}
 The columns of the parity check matrix $H(\CC_\C)$ are points of $\C$ in homogeneous coordinates.  By Remark \ref{rem3}, the number of cosets in Table \ref{tab1} is equal to the cardinality of the corresponding orbits $\M_j$ multiplied by $\#\F_q^*=q-1$. The cardinalities $\#\M_j$ are noted in \eqref{eq2_point_orbits_gen} and  \eqref{eq2_point_orbits_j=0}.

\textbf{(1)} Let a coset leader weight be 0.

The coset is the code $\CC_\C$ of minimum distance 5; it does not contain weight 3 words. We obtain row 1 of Table \ref{tab1}.

\textbf{(2)} Let a coset leader weight be 1.

In this case, the syndrome of the coset is a column of $H(\CC_\C)$ multiplied by an element of $\F_q^*$; it is a $\C$-point. This type of cosets does not contain weight 3 vectors as minimum distance of $\CC_\C$ is equal to~5. We have $\#\M_1=q+1$, see \eqref{eq2_point_orbits_gen}; therefore the number of cosets is $\#\M_1\cdot(q-1)=(q+1)(q-1)$. We obtain row 2 of Table  \ref{tab1}.

\textbf{(3)} Let a coset leader weight be 2.

In this case, the syndrome of the coset can be represented as  a linear combination of two columns of $H(\CC_\C)$; it is an RC-point, see Remark \ref{rem2:r_ij}. Since every RC-point lies in a few $3_\C$-planes, the syndrome also can be represented (in a few manners) as  a linear combination of three columns of $H(\CC_\C)$; such a combination gives a weight 3 vector in the  coset. Thus, a weight 3 vector in the  coset corresponds to an \emph{intersection off $\C$} of a real chord and a $3_\C$-plane. If a real chord lies in a $3_\C$-plane this does not affect to the number of the weight 3 vectors. To calculate the number $\TT$ of the intersections, we should take the total number of $3_\C$-planes through an RC-point and subtract from it the number $q-1$ of $3_\C$-planes in which a real chord lies (see Lemma \ref{lem2_3 2}).

If $\xi=1$, the syndromes are points of $\M_3\cup\M_5=\{\text{RC-points}\}$, see \eqref{eq2_=1_orbit_point}. We take $r_{3,5}^{(1)}=\frac{1}{6}(q^2+q-2)$ and $r_{3,3}^{(1)}=\frac{1}{6}(q^2+q+4)$ from  Theorem \ref{Th2_rij}(i), subtract $q-1$, and obtain the values $\TT=\frac{1}{6}(q^2-5q+4)$ and $\TT=\frac{1}{6}(q^2-5q+10)$ for rows 3 and~4 of Table  \ref{tab1}, respectively; cf. the values of $B_3=B_3(\V)$ for the cosets $\V^{(2)}_a$ and $\V^{(2)}_b$ with $q=7$ in Example \ref{ex4} below. The number of the corresponding cosets is $\#\M_5\cdot(q-1)=\frac{1}{3}(q^3-q)(q-1)$ and $\#\M_3\cdot(q-1)=\frac{1}{6}(q^3-q)(q-1)$, respectively. The values $\#\M_5$ and $\#\M_3$ are taken from~\eqref{eq2_point_orbits_gen}.

If $\xi\in\{-1,0\}$, the syndromes are points of $\M_4=\{\text{RC-points}\}$, see \eqref{eq2_=1_orbit_point}, \eqref{eq2_point_orbits_j=0}.  Therefore, the total number of $3_\C$-planes through an RC-point is the value $r_{34}^{(-1)}=r_{34}^{(0)}=\frac{1}{6}(q^2+q)$ of Theorem~\ref{Th2_rij} whence $\TT=\frac{1}{6}(q^2+q)-(q-1)=\frac{1}{6}(q^2-5q+6)$. The number of the cosets is $\#\M_4\cdot(q-1)=\frac{1}{2}(q^3-q)(q-1)$, see \eqref{eq2_point_orbits_gen}. We obtain rows 7 and 11 of Table \ref{tab1}; cf. the values of $B_3=B_3(\V)$ for the cosets $\V^{(2)}$ with $q=5,8,9,11$ in Example \ref{ex4}.

\textbf{(4)} Let a coset leader weight be 3.

In this case, as every point off $\C$ lies on exactly one chord of $\C$ (see Theorem \ref{th2_HirsCor4,5}(iv)), the syndrome of the coset belongs to $\PG(3,q)\setminus(\{\C\text{-points}\}\cup\{\text{RC-points}\})$. In other words, the syndrome belongs to  $\{\text{T-points}\}\cup\{\text{IC-points}\}$ for $\xi\ne0$ (see Theorem~\ref{th2_HirsCor4,5}(ii)) or $\{\text{TO-points}\}\cup\{(q+1)_\Gamma\text{-points}\}\cup\{\text{IC-points}\}$ for $\xi=0$ (see Theorem~\ref{th2_HirsCor4,5}(iii)); see also Remarks \ref{rem2:afterNot1} and \ref{rem2:r_ij}.   Thus, a weight 3 vector in the  coset corresponds to an intersection of a tangent or an imaginary chord with a $3_\C$-plane. No  tangent or imaginary chord lies in a $3_\C$-plane, otherwise we have an intersection of a tangent or an imaginary chord with a real chord lying in the $3_\C$-plane, contradiction, see Theorem \ref{th2_HirsCor4,5}(iv). Therefore, to calculate the number of the intersections, we directly use the needed values of $r_{3j}^{(\xi)}$ from Theorem \ref{Th2_rij} without subtracting.

If $\xi\in\{-1,1\}$, we have $\M_2=\{\text{T-points}\}$, see \eqref{eq2_point_orbits_gen}. For  rows 5 and 8 of Table  \ref{tab1}, we take $r_{32}^{(1)}=r_{32}^{(-1)}=\frac{1}{6}(q^2-3q+2)$
from Theorem \ref{Th2_rij}; cf. the values of $B_3=B_3(\V)$ for the cosets $\V^{(3)}_a$ with $q=5,7,8,11$ in Example \ref{ex4}. The number of the cosets is $\#\M_2\cdot(q-1)=q(q+1)(q-1)$.

If $\xi=-1$, we have $\M_3\cup\M_5=\{\text{IC-points}\}$, $\M_3=\{3_\Gamma\text{-points}\}$, $\M_5=\{0_\Gamma\text{-points}\}$, see~\eqref{eq2_=1_orbit_point}, \eqref{eq2_point_orbits_gen}. For rows 9 and 10 of Table  \ref{tab1}, we take, respectively, $r_{35}^{(-1)}=\frac{1}{6}(q^2-q-2)$ and $r_{33}^{(-1)}=\frac{1}{6}(q^2-q+4)$
from Theorem \ref{Th2_rij}; cf. the values of $B_3=B_3(\V)$ for the cosets $\V^{(3)}_b$ and $\V^{(3)}_c$ with $q=5,8,11$ in Example \ref{ex4}. The number of the cosets is $\#\M_5\cdot(q-1)=\frac{1}{3}(q^3-q)(q-1)$ and $\#\M_3\cdot(q-1)=\frac{1}{6}(q^3-q)(q-1)$, respectively.

If $\xi=1$, we have $\M_4=\{\text{IC-points}\}$, see \eqref{eq2_=1_orbit_point}. For  row 6 of Table \ref{tab1}, we take $r_{34}^{(1)}=\frac{1}{6}(q^2-q)$ from Theorem \ref{Th2_rij}; cf. the values of $B_3=B_3(\V)$ for the coset $\V^{(3)}_b$   with $q=7$ in Example \ref{ex4}. The number of the cosets is $\#\M_4\cdot(q-1)=\frac{1}{2}(q^3-q)(q-1)$.

If $\xi=0$, we have $\M_3=\{\text{TO-points}\}$, $\M_2=\{(q+1)_\Gamma\text{-points}\}$, $\M_5=\{\text{IC-points}\}$,  see~\eqref{eq2_point_orbits_j=0}. For rows 12 and 13 of Table  \ref{tab1}, we take, respectively, $r_{33}^{(0)}=\frac{1}{6}(q^2-3q)$ and $r_{32}^{(0)}=r_{35}^{(0)}=\frac{1}{6}(q^2-q)$
from Theorem \ref{Th2_rij}(ii); cf. the values of $B_3=B_3(\V)$ for the cosets $\V^{(3)}_a$ and $\V^{(3)}_b$   with $q=9$ in Example \ref{ex4}. The number of the cosets is $\#\M_3\cdot(q-1)=(q^2-1)(q-1)$ and $(\#\M_2+\#\M_5)\cdot(q-1)=((q+1)+(q^3-q)/2)\cdot(q-1)$, respectively.
\end{proof}

The following lemma is obvious.
\begin{lemma}\label{lem3_w012W123}
Let $\CC$ be an $[n,n-4,5]_q3$ MDS code. Let $\V^{(W)}$ be one of its cosets of weight $W\in\{1,2,3\}$. Then
  \begin{align}\label{eq3:begin}
    &B_0(\V^{(1)})=B_2(\V^{(1)})=B_3(\V^{(1)})=0,~B_1(\V^{(1)})=1,~ \displaybreak[3] \\
    &B_0(\V^{(2)})=B_1(\V^{(2)})=0,~B_2(\V^{(2)})=1,~ B_0(\V^{(3)})=B_1(\V^{(3)})=B_2(\V^{(3)})=0.\notag
    \end{align}
\end{lemma}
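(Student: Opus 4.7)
The plan is to leverage the single fact that $\CC$ has minimum distance $5$: whenever two distinct vectors lie in the same coset, their difference is a nonzero codeword and hence has weight at least $5$. From this observation everything in \eqref{eq3:begin} follows in a few lines; the lemma really is routine.

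First I would note that for any coset $\V^{(W)}$ with $W \ge 1$, we have $\V^{(W)} \neq \CC$, hence $\V^{(W)}$ contains no codeword, and so $B_0(\V^{(W)}) = 0$. Next, by the definition of coset weight, $B_v(\V^{(W)}) = 0$ for every $v < W$. This already yields $B_1(\V^{(2)}) = 0$ and $B_1(\V^{(3)}) = B_2(\V^{(3)}) = 0$, together with the vanishing of $B_0$ in each of the three cases.

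For the uniqueness statements $B_1(\V^{(1)}) = 1$ and $B_2(\V^{(2)}) = 1$, I would argue that if two distinct vectors $\x,\x' \in \V^{(W)}$ both had weight $W \in \{1,2\}$, then $\x - \x'$ would be a nonzero codeword of weight at most $2W \le 4 < 5$, contradicting the minimum distance of $\CC$. Hence the minimum-weight vector in $\V^{(1)}$ (respectively $\V^{(2)}$) is unique.

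It remains to handle $B_2(\V^{(1)}) = B_3(\V^{(1)}) = 0$. Let $\x$ be the unique weight $1$ vector of $\V^{(1)}$. Any other element $\x' \in \V^{(1)}$ satisfies $\x' - \x \in \CC \setminus \{0\}$, so $\mathrm{wt}(\x' - \x) \ge 5$, and the triangle-type bound gives $\mathrm{wt}(\x') \ge \mathrm{wt}(\x' - \x) - \mathrm{wt}(\x) \ge 5 - 1 = 4$. In particular, no vector of weight $2$ or $3$ can occur in $\V^{(1)}$, completing \eqref{eq3:begin}. There is no real obstacle here; the only thing to be careful about is to state each minimum-distance argument as a clean contradiction rather than leaving it implicit.
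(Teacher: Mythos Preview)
Your proof is correct and is precisely the routine argument one expects here; the paper itself offers no proof beyond declaring the lemma ``obvious'' (it also notes in Section~\ref{subsec_cosets} that cosets of weight $\le t=\lfloor(d-1)/2\rfloor=2$ have a unique leader, which is your uniqueness step for $B_1(\V^{(1)})$ and $B_2(\V^{(2)})$). There is nothing to add.
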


\begin{theorem}\label{th3_coset_distr_isknown}
The weight distribution of any coset of  $\CC_\C$ is uniquely determined by Table~\emph{\ref{tab1}}.
\end{theorem}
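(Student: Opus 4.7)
The plan is a direct application of Bonneau's formula (Theorem~\ref{th2_Bon}) to the data already assembled. Since $\CC_\C$ is an MDS code with $n=q+1$, $k=q-3$, $d=5$, the hypothesis of that theorem requires knowledge of $B_v(\V)$ for $0\le v\le d-2=3$; once these four values are in hand, the remaining components $B_w(\V)$, $4\le w\le q+1$, are given explicitly by \eqref{eq2_wd_cosetBon}--\eqref{eq2_Bw2Bon}. So the task reduces to checking that, for every coset $\V$ of $\CC_\C$, the quadruple $(B_0(\V),B_1(\V),B_2(\V),B_3(\V))$ is pinned down by what precedes the statement.

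First I would dispose of the zero coset: $\V=\CC_\C$ has $B_0=1$ and $B_1=B_2=B_3=0$ because the minimum distance is~$5$. For a coset $\V^{(W)}$ of positive weight $W\in\{1,2,3\}$, Lemma~\ref{lem3_w012W123} fixes $B_0,B_1,B_2$ completely, leaving only $B_3(\V^{(W)})$ to be supplied. But $B_3(\V^{(W)})$ is exactly the quantity tabulated in Table~\ref{tab1} and proved in Theorem~\ref{th3_3_coset}, where each coset type is identified with the $G_q$-orbit $\M_j$ of its syndrome in $\PG(3,q)$ and the corresponding value is computed by counting $3_\C$-planes through a representative point.

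It then remains to verify that Table~\ref{tab1} genuinely exhausts the cosets. By Remark~\ref{rem3}, nonzero coset syndromes are in one-to-one correspondence with points of $\PG(3,q)$ after identifying scalar multiples from $\F_q^*$; Theorem~\ref{th2_HP_coset}(i) guarantees that such scalar multiples share a weight distribution. The orbit decomposition of Theorem~\ref{th2_HirsCor4,5} partitions $\PG(3,q)$ into the five orbits $\M_1,\ldots,\M_5$, and the split between weight-$1$, weight-$2$ and weight-$3$ cosets is governed by whether the point lies on $\C$, on a real chord, or on a tangent or imaginary chord — which, by Theorem~\ref{th2_HirsCor4,5}(iv), is a genuine partition. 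The rows of Table~\ref{tab1} are organized by this partition for each of the three residue classes $\xi\in\{0,1,-1\}$ of $q$ modulo $3$, so every coset appears in exactly one row, with $B_3(\V)$ prescribed.

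No real obstacle is expected: the proof is essentially an assembly step, stating that Lemma~\ref{lem3_w012W123} supplies $B_0,B_1,B_2$, Table~\ref{tab1} supplies $B_3$, and Theorem~\ref{th2_Bon} then supplies $B_w$ for $w\ge 4$. The only thing to be careful about is the completeness of the case analysis, which is guaranteed by Theorem~\ref{th2_HirsCor4,5}(iv) together with the orbit cardinalities in \eqref{eq2_point_orbits_gen} and \eqref{eq2_point_orbits_j=0}; a quick sanity check is that, in each table block, the total number of cosets sums to $q^4-1$, matching $q^{n-k}-1$ as required. The explicit closed-form expressions for $B_w(\V)$ in each row — the genuinely computational part — would then be recorded in Section~\ref{sec_wd_coset}.
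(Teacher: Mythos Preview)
Your argument is correct, but it takes a different route from the paper's. The paper invokes Theorem~\ref{th2_HP_coset}(ii): since $\CC_\C^\bot$ is a $[q+1,4,q-2]_q$ MDS code with exactly $s(\CC_\C^\bot)=4$ distinct nonzero weights (by \cite[Th.~10]{EzerGrasSoleMDS2011}), and since every coset of $\CC_\C$ has weight at most $R=3<4$, knowledge of $B_1,B_2,B_3$ already forces the full distribution by the Delsarte--MacWilliams uniqueness result. You instead appeal to Bonneau's formula (Theorem~\ref{th2_Bon}), using that $\CC_\C$ is MDS with $d=5$, so $B_0,\ldots,B_{d-2}=B_0,\ldots,B_3$ determine all higher $B_w$ via the explicit expressions \eqref{eq2_wd_cosetBon}--\eqref{eq2_Bw2Bon}. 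Both arguments feed on the same input (Lemma~\ref{lem3_w012W123} plus Table~\ref{tab1}); yours is more constructive and is precisely the mechanism the paper exploits in Section~\ref{sec_wd_coset} to produce closed forms, whereas the paper's proof of this particular theorem is a bare uniqueness statement that sidesteps any computation.
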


\begin{proof}
  The code $\CC_\C^\bot$ dual to $\CC_\C$ is a $[q+1,4,q-2]_qR$ MDS code, $R=q-4$. By \cite[Th. 10]{EzerGrasSoleMDS2011}, $\CC_\C^\bot$ has $4$ distinct nonzero weights, i.e.  $s(\CC_\C^\bot)=4$. By Lemma \ref{lem3_w012W123} and Theorem \ref{th3_3_coset} we know the numbers of vectors of weights $1$, $2$, and $3=s(\CC_\C^\bot)-1$ for all the cosets of $\CC_\C$, see Table \ref{tab1} for weight 3 vectors.  Now the assertion follows from Theorem \ref{th2_HP_coset}(ii).
\end{proof}

\begin{theorem}[\emph{\textbf{classification of the cosets}}]\label{th3_disatinct_cosets}
Let $q\equiv\xi\pmod3$. For the code $\CC_\C$,  the following holds:
\begin{description}
  \item[(i)]  All weight $1$ cosets have the same weight distribution.
  \item[(ii)] If $\xi\neq1$, all weight $2$ cosets have the same weight distribution.

  For $\xi=1$, there are $2$ distinct weight distributions of weight $2$ cosets. The numbers of the corresponding cosets are $\frac{1}{3}(q^3-q)(q-1)$ and $\frac{1}{6}(q^3-q)(q-1)$.
   \item[(iii)] For $\xi=1$, the are $2$ distinct weight distributions of weight $3$ cosets. The numbers of the corresponding cosets are $(q^2+q)(q-1)$ and $\frac{1}{2}(q^3-q)(q-1)$.

    For  $\xi=-1$, there are $3$ distinct weight distributions of weight $3$ cosets. The numbers of the corresponding cosets are $(q^2+q)(q-1)$, $\frac{1}{3}(q^3-q)(q-1)$, and $\frac{1}{6}(q^3-q)(q-1)$.

       For $\xi=0$, there are $2$ distinct weight distributions of weight $3$ cosets. The numbers of the corresponding cosets are $(q^2-1)(q-1)$ and $\frac{1}{2}(q+1)(q^2-q+2)(q-1)$.
\end{description}
\end{theorem}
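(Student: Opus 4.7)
The plan is to deduce Theorem \ref{th3_disatinct_cosets} as an essentially immediate consequence of Theorem \ref{th3_coset_distr_isknown} combined with the bookkeeping already assembled in Table \ref{tab1}. Since Theorem \ref{th3_coset_distr_isknown} tells us that the entire weight distribution of a coset $\V$ of $\CC_\C$ is determined by the four values $B_0(\V), B_1(\V), B_2(\V), B_3(\V)$, and since Lemma \ref{lem3_w012W123} fixes $B_0, B_1, B_2$ from the coset weight $W\in\{0,1,2,3\}$ alone, two cosets $\V,\V'$ have the same weight distribution if and only if they have the same weight $W$ \emph{and} the same value of $B_3$. Thus the classification reduces to counting, for each weight $W$ and each residue class $\xi$, the distinct values of $B_3$ appearing among the rows of Table \ref{tab1} of weight $W$, and then summing the cardinalities in the ``number of cosets'' column for each value.

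For part (i), Table \ref{tab1} row 2 shows that every weight $1$ coset has $B_3=0$, so there is a single weight distribution. For part (ii), I would inspect the weight $2$ rows: for $\xi=-1$ only row 7 appears and for $\xi=0$ only row 11 appears, each yielding a unique value of $B_3$; for $\xi=1$ rows 3 and 4 give $B_3=(q^2-5q+4)/6$ and $B_3=(q^2-5q+10)/6$, which differ by $1$ and hence are distinct for all $q\ge 5$, so there are exactly two weight distributions, with coset counts $\tfrac{1}{3}(q^3-q)(q-1)$ and $\tfrac{1}{6}(q^3-q)(q-1)$ read straight off the table. For part (iii), I would carry out the same check for weight $3$: for $\xi=1$ the values from rows 5 and 6 are $(q^2-3q+2)/6$ and $(q^2-q)/6$, whose difference $(2q-2)/6$ is nonzero; for $\xi=-1$ the three values from rows 8, 9, 10 are $(q^2-3q+2)/6$, $(q^2-q-2)/6$, $(q^2-q+4)/6$, whose pairwise differences $(2q-4)/6$, $6/6$, $(2q+2)/6$ are nonzero for $q\ge 5$; for $\xi=0$ the values from rows 12 and 13 are $(q^2-3q)/6$ and $(q^2-q)/6$, whose difference $2q/6$ is nonzero. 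In each case the corresponding coset counts are exactly the entries in column 6 of Table \ref{tab1}.

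There is essentially no obstacle: the nontrivial work has already been done in Theorems \ref{th3_3_coset} and \ref{th3_coset_distr_isknown}. The only thing to verify is the elementary arithmetic that the $B_3$ values listed in different rows of the same weight class are pairwise distinct; since each is a polynomial in $q$ of degree $2$ with leading term $q^2/6$, the distinctness is controlled entirely by the lower order terms and is immediate from the calculation above for every $q\ge 5$. The proof is therefore a short case analysis of Table \ref{tab1} supported by Theorem \ref{th3_coset_distr_isknown}.
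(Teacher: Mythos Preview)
Your proposal is correct and follows exactly the approach of the paper, which simply states that the assertion directly follows from Table~\ref{tab1} and Theorem~\ref{th3_coset_distr_isknown}. You have merely spelled out explicitly the verification that the $B_3$ values in the relevant rows are pairwise distinct, which the paper leaves implicit.
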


\begin{proof}
 The assertion directly follows from Table  \ref{tab1} and Theorem \ref{th3_coset_distr_isknown}.
\end{proof}

\section{On the weight distribution of the cosets of the code~$\CC_\C$}\label{sec_wd_coset}
We use the following well-known combinatorial identities, see e.g. \cite[Sec. 1, Eqs. (I), (III), (IV), Problem 9(a),(b)]{Riordan}:
\begin{align}
 &\binom{h}{\ell}=\binom{h}{h-\ell};\label{eq4_Riordan_ident0}\displaybreak[3]\\
 &\binom{h}{\ell}=\binom{h-1}{\ell}+\binom{h-1}{\ell-1};\label{eq4_Riordan_ident1}\displaybreak[3]\\
 &\binom{h}{m}\binom{m}{p}=\binom{h}{p}\binom{h-p}{m-p}=\binom{h}{m-p}\binom{h-m+p}{p};
 \label{eq4_Riordan_ident2}\displaybreak[3]\\
 &\sum_{j=0}^m(-1)^j\binom{h}{j}=(-1)^m\binom{h-1}{m}=(-1)^m\binom{h-1}{h-1-m}=\binom{m-h}{m};\label{eq4_Riordan_ident3}\displaybreak[3]\\
 &\sum_{j=0}^{h-m}(-1)^j\binom{h}{m+j}=\binom{h-1}{m-1}.\label{eq4_Riordan_ident4}
\end{align}
Also, it is well known, see e.g. \cite[Th. 11.3.6]{MWS}, \cite[Th. 7.4.1]{HufPless}, that for $w\ge d$, the weight distribution $A_{w}(\CC)$ of an $[n,k,d=n-k+1]_q$ MDS code $\CC$ has the form
\begin{align}\label{eq4:wd_MDS}
 A_{w}(\CC)=\binom{n}{w}\sum_{j=0}^{w-d}(-1)^j\binom{w}{j}(q^{w-d+1-j}-1).
\end{align}

\begin{lemma} Let $\CC$ be a $[q+1,q-3,5]_q$ MDS code. Then the following holds:
  \begin{align}\label{eq4:1stpartBon}
   \binom{q+1}{w}\sum_{j=0}^{w-4}(-1)^j\binom{w}{j}q^{w-4-j}=A_w(\CC)+(-1)^{w}\binom{q+1}{w}\binom{w-1}{3},~w\ge4.
  \end{align}
\end{lemma}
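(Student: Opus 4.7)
The plan is to derive both sides in terms of $A_w(\CC)$ via the MDS weight formula \eqref{eq4:wd_MDS} with $n=q+1$, $d=5$, and then reduce the remaining equality to a purely binomial identity that follows from \eqref{eq4_Riordan_ident0}, \eqref{eq4_Riordan_ident1}, \eqref{eq4_Riordan_ident3}.

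First I would separate the boundary case $w=4$: here $A_4(\CC)=0$ (since $d=5$), the outer sum on the left has just the $j=0$ term, which equals $\binom{q+1}{4}$, while on the right $(-1)^4\binom{q+1}{4}\binom{3}{3}=\binom{q+1}{4}$, so the identity is immediate. For $w\ge 5$ the formula \eqref{eq4:wd_MDS} gives
\begin{equation*}
A_w(\CC)=\binom{q+1}{w}\sum_{j=0}^{w-5}(-1)^j\binom{w}{j}q^{w-4-j}-\binom{q+1}{w}\sum_{j=0}^{w-5}(-1)^j\binom{w}{j}.
\end{equation*}

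Next I would split the sum on the left-hand side of \eqref{eq4:1stpartBon} by peeling off the top index $j=w-4$, noting $(-1)^{w-4}=(-1)^{w}$ and $\binom{w}{w-4}=\binom{w}{4}$ by \eqref{eq4_Riordan_ident0}:
\begin{equation*}
\binom{q+1}{w}\sum_{j=0}^{w-4}(-1)^j\binom{w}{j}q^{w-4-j}=\binom{q+1}{w}\sum_{j=0}^{w-5}(-1)^j\binom{w}{j}q^{w-4-j}+(-1)^{w}\binom{q+1}{w}\binom{w}{4}.
\end{equation*}
Replacing the first term on the right by $A_w(\CC)+\binom{q+1}{w}\sum_{j=0}^{w-5}(-1)^j\binom{w}{j}$ using the previous display, the claim \eqref{eq4:1stpartBon} reduces to the purely combinatorial statement
\begin{equation*}
\sum_{j=0}^{w-5}(-1)^j\binom{w}{j}+(-1)^{w}\binom{w}{4}=(-1)^{w}\binom{w-1}{3}.
\end{equation*}

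Finally, \eqref{eq4_Riordan_ident3} with $h=w$, $m=w-5$ together with \eqref{eq4_Riordan_ident0} yields $\sum_{j=0}^{w-5}(-1)^j\binom{w}{j}=(-1)^{w-5}\binom{w-1}{w-5}=(-1)^{w+1}\binom{w-1}{4}$, and Pascal's rule \eqref{eq4_Riordan_ident1} gives $\binom{w}{4}=\binom{w-1}{4}+\binom{w-1}{3}$. Substituting both and cancelling the $(-1)^{w}\binom{w-1}{4}$ terms leaves exactly $(-1)^{w}\binom{w-1}{3}$ on both sides, completing the proof.

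There is no real obstacle here, since this is a routine manipulation; the only care needed is in tracking the sign parities (in particular $(-1)^{w-4}=(-1)^{w}$ and $(-1)^{w-5}=(-1)^{w+1}$) and matching the upper summation index $w-5$ coming from \eqref{eq4:wd_MDS} against $w-4$ appearing in \eqref{eq4:1stpartBon}.
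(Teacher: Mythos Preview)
Your proposal is correct and follows essentially the same approach as the paper: both peel off the $j=w-4$ term, identify the remaining sum with the MDS weight formula \eqref{eq4:wd_MDS}, evaluate $\sum_{j=0}^{w-5}(-1)^j\binom{w}{j}$ via \eqref{eq4_Riordan_ident3}, and finish with Pascal's rule \eqref{eq4_Riordan_ident1}. The only cosmetic differences are that you treat $w=4$ separately (the paper leaves this implicit, relying on the empty sum convention) and you write out $A_w(\CC)$ first and substitute, whereas the paper uses the $q^{w-4-j}=(q^{w-4-j}-1)+1$ trick to extract $A_w(\CC)$ directly.
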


\begin{proof}
Using \eqref{eq4:wd_MDS} and \eqref{eq4_Riordan_ident0}, we write the left part of the assertion as
\begin{align*}
&\binom{q+1}{w}\sum_{j=0}^{w-5}(-1)^j\binom{w}{j}\left(q^{w-4-j}-1+1\right)+
\binom{q+1}{w}\cdot(-1)^{w-4}\binom{w}{w-4}\displaybreak[3]\\
&=A_w(\CC)+\binom{q+1}{w}\left((-1)^{w}\binom{w}{4}+\sum_{j=0}^{w-5}(-1)^j\binom{w}{j}\right)\displaybreak[3]\\
&=A_w(\CC)+\binom{q+1}{w}\left((-1)^{w}\binom{w}{4}+(-1)^{w-5}\binom{w-1}{w-5}\right)
\end{align*}
where we applied the 1-st equality of \eqref{eq4_Riordan_ident3} to $\sum_{j=0}^{w-5}(-1)^j\binom{w}{j}$.  Now, using \eqref{eq4_Riordan_ident0} and   \eqref{eq4_Riordan_ident1}, we obtain $\binom{w}{4}-\binom{w-1}{4}=\binom{w-1}{3}$.
\end{proof}

\begin{theorem}\label{th4:wdW1}
Let $\CC$ be the $[q+1,q-3,5]_q3$ MDS code $\CC_\C$.
   All  the weight $1$ cosets $\V^{(1)}$ of $\CC_\C$ have the same weight distribution such that, in addition to \eqref{eq3:begin}, we have
 \begin{align}
& B_4(\V^{(1)})=\binom{q}{4};\displaybreak[3]\notag\\
 &  B_w(\V^{(1)})=A_w(\CC)+(-1)^{w}\left(\binom{q+1}{w}\binom{w-1}{3}-\binom{q}{w-1}\binom{w-2}{2}\right),~w\ge5.\label{eq4:wdW1}
 \end{align}
\end{theorem}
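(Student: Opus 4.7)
The plan is to apply Bonneau's formula (Theorem \ref{th2_Bon}) to any weight-$1$ coset $\V^{(1)}$ of the MDS code $\CC_\C$, which has $n = q+1$ and $d = 5$. By Lemma \ref{lem3_w012W123} together with row 2 of Table \ref{tab1}, the first four weight components are $B_0(\V^{(1)}) = 0$, $B_1(\V^{(1)}) = 1$, $B_2(\V^{(1)}) = 0$, and $B_3(\V^{(1)}) = 0$; by Theorem \ref{th3_coset_distr_isknown} this already proves that every weight-$1$ coset shares the same weight distribution, so it remains to derive a closed form for $B_w(\V^{(1)})$, $w \geq 4$.

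First I would handle the main term of \eqref{eq2_wd_cosetBon} using the freshly established identity \eqref{eq4:1stpartBon}, which immediately rewrites it as $A_w(\CC) + (-1)^w\binom{q+1}{w}\binom{w-1}{3}$. Next I would evaluate the correction term $\BB_w(\V^{(1)})$: because only $v=1$ yields a nonzero $B_v$ with $v \leq 3$, and because the restriction $v \leq w-j$ in \eqref{eq2_Bw2Bon} forces $j \leq w-1$, the double sum collapses to just the three values $j \in \{w-3,w-2,w-1\}$, producing
\begin{align*}
\BB_w(\V^{(1)}) = (-1)^w\Bigl[-\binom{q}{w-1} + q\binom{q-1}{w-2} - \binom{q}{2}\binom{q-2}{w-3}\Bigr].
\end{align*}
Two applications of identity \eqref{eq4_Riordan_ident2} give $q\binom{q-1}{w-2} = (w-1)\binom{q}{w-1}$ and $\binom{q}{2}\binom{q-2}{w-3} = \binom{q}{w-1}\binom{w-1}{2}$, so $\binom{q}{w-1}$ factors out and the bracket reduces to $-1 + (w-1) - \binom{w-1}{2}$; a short computation shows this equals $-\binom{w-2}{2}$. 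Hence $\BB_w(\V^{(1)}) = -(-1)^w\binom{q}{w-1}\binom{w-2}{2}$, and adding the two pieces yields \eqref{eq4:wdW1} for all $w \geq 4$. The explicit value $B_4(\V^{(1)}) = \binom{q}{4}$ is then the $w = 4$ specialisation: since $A_4(\CC) = 0$ the formula becomes $\binom{q+1}{4} - \binom{q}{3}$, which collapses to $\binom{q}{4}$ by Pascal's rule.

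The only real obstacle is the sign and index bookkeeping in $\BB_w$: keeping the three surviving $j$-contributions aligned and recognising the two uses of identity \eqref{eq4_Riordan_ident2} needed to factor out $\binom{q}{w-1}$ before the final polynomial collapse. Everything else is mechanical, and no geometric input beyond the value $B_3(\V^{(1)}) = 0$ (already recorded in Table \ref{tab1}) enters the argument.
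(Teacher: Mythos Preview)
Your proof is correct and follows essentially the same route as the paper: both reduce $\BB_w(\V^{(1)})$ to the three surviving terms $j\in\{w-3,w-2,w-1\}$ with $v=1$, then use the product identity \eqref{eq4_Riordan_ident2} to extract the common factor $\binom{q}{w-1}$ before collapsing the remaining expression to $-\binom{w-2}{2}$. The only cosmetic difference is that the paper keeps the sigma notation and invokes \eqref{eq4_Riordan_ident4} for the alternating sum, whereas you expand the three terms and do the polynomial arithmetic by hand.
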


\begin{proof} By  \eqref{eq3:begin}, in $\BB_{w}(\V^{(1)})$ of \eqref{eq2_Bw2Bon} with $d=5$, all non-zero terms correspond to $B_1(\V^{(1)})=1$. Therefore, we save in
\eqref{eq2_Bw2Bon} only summands with $v=1$. If $j=w$, no terms with $v=1$ exist. Putting $n=q+1$, we have
\begin{align*}
\BB_{w}(\V^{(1)})=\sum_{j=w-3}^{w-1}(-1)^j\binom{j+q+1-w}{j}\binom{q}{w-j-1}.
\end{align*}
Now, using consequently \eqref{eq4_Riordan_ident0} and the 2-nd equality of \eqref{eq4_Riordan_ident2}, we have
      \begin{align*}
     &\BB_{w}(\V^{(1)})=\sum_{j=w-3}^{w-1}(-1)^j\binom{j+q+1-w}{j}\binom{q}{j+q+1-w}\displaybreak[3]\\
     &=\binom{q}{q+1-w}\sum_{j=w-3}^{w-1}(-1)^j\binom{w-1}{j}.
     \end{align*}
     Changing the variable $i=j-w+3$, again using \eqref{eq4_Riordan_ident0}, and appling \eqref{eq4_Riordan_ident4} to the sum\\ $\sum_{i=0}^2(-1)^i\binom{w-1}{w-3+i}$, we obtain
      \begin{align*}
     \BB_{w}(\V^{(1)})=(-1)^{w-3}\binom{q}{w-1}\sum_{i=0}^2(-1)^i\binom{w-1}{w-3+i}=-(-1)^{w}\binom{q}{w-1}\binom{w-2}{w-4},
     \end{align*}
     whence, together with \eqref{eq2_wd_cosetBon}, \eqref{eq4:1stpartBon},  and \eqref{eq4_Riordan_ident0}, the assertion for $w\ge5$ follows. For $B_4(\V^{(1)})$, we take into account that $A_4(\CC)=0$ and use \eqref{eq4_Riordan_ident1}.
\end{proof}

\begin{theorem}\label{th4:wdW2}
Let $\CC$ be the $[q+1,q-3,5]_q3$ MDS code $\CC_\C$ for which the distribution of weight $3$ vectors in all the cosets  is given by Table \emph{\ref{tab1}}. Then, in addition to \eqref{eq3:begin}, a weight $2$ coset $\V^{(2)}$ of $\CC_\C$ has the following weight distribution:
  \begin{align}
  & B_3(\V^{(2)})=\frac{1}{3}\binom{q-2}{2}\text{ if }q\not\equiv1\pmod3;\displaybreak[3]\notag\\
  &B_3(\V^{(2)})\in\left\{\frac{q^2-5q+4}{6},~\frac{q^2-5q+10}{6}\right\}\text{ if }q\equiv1\pmod3;\notag\\
  &B_4(\V^{(2)})=\binom{q+1}{4}-\binom{q-1}{2}-(q-2)B_3(\V^{(2)});\displaybreak[3]\notag\\
    &B_w(\V^{(2)})=A_w(\CC)\displaybreak[3]\notag\\
    &+(-1)^{w}\left(\binom{q+1}{w}\binom{w-1}{3}-(w-3)\binom{q-1}{w-2}-\binom{q-2}{w-3}B_3(\V^{(2)})\right),~w\ge5.\label{eq4:wdW2}
  \end{align}
\end{theorem}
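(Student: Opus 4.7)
The plan is to mimic the proof of Theorem \ref{th4:wdW1}: specialize Bonneau's identity (Theorem \ref{th2_Bon}) to the MDS parameters $n=q+1$, $d=5$ and to a weight-$2$ coset $\V=\V^{(2)}$, then use Lemma \ref{lem3_w012W123} to discard all terms in \eqref{eq2_Bw2Bon} except those with $v=2$ and $v=3$. Since $B_2(\V^{(2)})=1$ is known and $B_3(\V^{(2)})$ is given by Table~\ref{tab1}, this will reduce $\BB_w(\V^{(2)})$ to an explicit closed-form expression in $w$ and $q$, and \eqref{eq4:1stpartBon} will then convert the leading term of \eqref{eq2_wd_cosetBon} into $A_w(\CC)$ plus a correction, producing \eqref{eq4:wdW2}.

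The index analysis is as follows. In \eqref{eq2_Bw2Bon} (with $d=5$), the outer index ranges over $j\in\{w-3,\dots,w\}$. The inner constraint $w-j-v\ge 0$ with $v=2$ forces $j\in\{w-3,w-2\}$, while with $v=3$ it forces $j=w-3$ only; no contribution comes from $v\le 1$ by \eqref{eq3:begin}, and the term $j=w$ has no admissible $v$. Writing out the three surviving summands yields
\begin{align*}
\BB_w(\V^{(2)}) &= (-1)^{w-3}(q-1)\binom{q-2}{w-3} + (-1)^{w-2}\binom{q-1}{w-2} \\
&\quad + (-1)^{w-3}\binom{q-2}{w-3}\,B_3(\V^{(2)}).
\end{align*}

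Next, I will apply identity \eqref{eq4_Riordan_ident2} in the form $(q-1)\binom{q-2}{w-3}=(w-2)\binom{q-1}{w-2}$ (a direct consequence of $\binom{q-1}{w-2}(w-2)=\binom{q-1}{w-2}\cdot\frac{(q-1)!/(q-w+1)!}{(w-2)!}\cdot\frac{w-2}{1}$, equivalently the second equality in \eqref{eq4_Riordan_ident2} with $h=q-1,\,m=w-2,\,p=1$). This collapses the first two summands to $-(-1)^{w}(w-3)\binom{q-1}{w-2}$. Substituting into \eqref{eq2_wd_cosetBon} and replacing its leading term by $A_w(\CC)+(-1)^{w}\binom{q+1}{w}\binom{w-1}{3}$ via \eqref{eq4:1stpartBon} delivers \eqref{eq4:wdW2} for $w\ge 5$. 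The formula for $B_4(\V^{(2)})$ is just the $w=4$ instance after using $A_4(\CC)=0$ and $\binom{3}{3}=\binom{1}{1}=1$, so no separate argument is needed.

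The values of $B_3(\V^{(2)})$ stated in the theorem are read directly from Table~\ref{tab1}: rows 3 and 4 give the two values for $\xi=1$, while rows 7 and 11 give the common value $(q^2-5q+6)/6$ for $\xi\in\{-1,0\}$, which equals $\binom{q-2}{2}/3$. I do not foresee a real obstacle: the only delicate step is the collapse $-(q-1)\binom{q-2}{w-3}+\binom{q-1}{w-2}=-(w-3)\binom{q-1}{w-2}$, together with careful bookkeeping of the admissible $(j,v)$ pairs and the signs $(-1)^{w-3}$ versus $(-1)^w$ in \eqref{eq2_Bw2Bon}.
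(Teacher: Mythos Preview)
Your proposal is correct and follows essentially the same route as the paper: specialize Bonneau's formula to $n=q+1$, $d=5$, keep only the $(j,v)$ pairs $(w-3,2),(w-2,2),(w-3,3)$ via Lemma~\ref{lem3_w012W123}, and invoke \eqref{eq4:1stpartBon}. The only cosmetic difference is in collapsing the two $v=2$ terms: the paper first factors $\binom{j+q+1-w}{j}\binom{q-1}{j+q+1-w}=\binom{q-1}{q+1-w}\binom{w-2}{j}$ and then evaluates $\sum_{i=0}^{1}(-1)^i\binom{w-2}{w-3+i}=w-3$, whereas you use the equivalent identity $(q-1)\binom{q-2}{w-3}=(w-2)\binom{q-1}{w-2}$ (which is in fact the \emph{first} equality of \eqref{eq4_Riordan_ident2} with $h=q-1$, $m=w-2$, $p=1$, not the second).
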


\begin{proof}
 By \eqref{eq3:begin}, in $\BB_{w}(\V^{(2)})$ of \eqref{eq2_Bw2Bon} with $d=5$, non-zero terms correspond to $B_2(\V^{(2)})=1$ and
 $B_3(\V^{(2)})$ given by Table \ref{tab1}, see also the 1-st and 2-nd rows of \eqref{eq4:wdW2}. Therefore, we save in~\eqref{eq2_Bw2Bon} only summands with $v=2,3$. The terms with $v=2$ (resp. $v=3$) exist if $j=w-3,w-2$ (resp. $j=w-3$). Putting $n=q+1$, we have
 \begin{align*}
     &\BB_{w}(\V^{(2)})=\sum_{j=w-3}^{w-2}(-1)^j\binom{j+q+1-w}{j}\binom{q-1}{w-j-2}+(-1)^{w-3}\binom{q-2}{w-3}\binom{q-2}{0}B_3(\V^{(2)}).
     \end{align*}
 Now, similarly to Proof of Theorem \ref{th4:wdW1}, we consequently use \eqref{eq4_Riordan_ident0} and the 2-nd equality of~\eqref{eq4_Riordan_ident2}, then change the variable $i=j-w+3$, again use \eqref{eq4_Riordan_ident0}, and calculate $\sum_{i=0}^1(-1)^i\binom{w-1}{w-3+i}=w-3$. As a result,
      \begin{align*}
     &\BB_{w}(\V^{(2)})=\sum_{j=w-3}^{w-2}(-1)^j\binom{j+q+1-w}{j}\binom{q-1}{j+q+1-w}-(-1)^{w}\binom{q-2}{w-3}B_3(\V^{(2)})\displaybreak[3]\\
     &=\binom{q-1}{q+1-w}\sum_{j=w-3}^{w-2}(-1)^j\binom{w-2}{j}-(-1)^{w}\binom{q-2}{w-3}B_3(\V^{(2)})\displaybreak[3]\\
     &=(-1)^{w-3}\binom{q-1}{w-2}\sum_{i=0}^1(-1)^i\binom{w-2}{w-3+i}-(-1)^{w}\binom{q-2}{w-3}B_3(\V^{(2)})\displaybreak[3]\\
     &=(-1)^{w-3}\binom{q-1}{w-2}(w-3)-(-1)^{w}\binom{q-2}{w-3}B_3(\V^{(2)}).
     \end{align*}
     Now, the assertion for $w\ge5$ can be obtained using \eqref{eq2_wd_cosetBon}, \eqref{eq4:1stpartBon}. For $B_4(\V^{(2)})$, we apply $A_4(\CC)=0$.
\end{proof}

\begin{theorem}\label{th4:wdW3}
Let $\CC$ be the $[q+1,q-3,5]_q3$ MDS code $\CC_\C$ for which the distribution of weight $3$ vectors in all the cosets  is given by Table \emph{\ref{tab1}}. Then, in addition to \eqref{eq3:begin}, a weight $3$ coset $\V^{(3)}$ of $\CC_\C$ has the following weight distribution:
  \begin{align}\label{eq4:wdW3}
  & B_3(\V^{(3)})\in\left\{\begin{array}{ccc}
                             \{(q^2-3q+2)/6,~(q^2-q)/6\} & \text{ if } & q\equiv1\pmod3 \\
                             \{\frac{q^2-3q+2}{6},~\frac{q^2-q-2}{6},~\frac{q^2-q+4}{6}\} & \text{ if } & q\equiv-1\pmod3 \\
                             \{(q^2-3q)/6,~(q^2-q)/6\} & \text{ if } & q\equiv0\pmod3
                           \end{array}
  \right.;\displaybreak[3]\\
  &B_4(\V^{(3)})=\binom{q+1}{4}-(q-2)B_3(\V^{(3)});\displaybreak[3]\notag\\
    &B_w(\V^{(3)})=A_w(\CC)+(-1)^{w}\left(\binom{q+1}{w}\binom{w-1}{3}-\binom{q-2}{w-3}B_3(\V^{(3)})\right),~w\ge5.\notag
  \end{align}
\end{theorem}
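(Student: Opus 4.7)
The plan is to follow the same template as the proofs of Theorems \ref{th4:wdW1} and \ref{th4:wdW2}, applying Bonneau's formula (Theorem \ref{th2_Bon}) with $d=5$, $n=q+1$, starting from the vanishing relations of Lemma \ref{lem3_w012W123}. The values of $B_3(\V^{(3)})$ listed in the first display are simply read off row-by-row from Table \ref{tab1} (rows 5,6 for $\xi=1$; rows 8,9,10 for $\xi=-1$; rows 12,13 for $\xi=0$), so no further geometric input is needed; everything else will be a mechanical deduction from Theorem \ref{th2_Bon} combined with the identity \eqref{eq4:1stpartBon}.

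The main simplification over the weight $2$ case is that in a weight $3$ coset all $B_v(\V^{(3)})$ with $v\le 2$ vanish, so the inner sum over $v$ in $\BB_w(\V^{(3)})$ collapses to the single term $v=3$. The constraint $v\le w-j$ combined with $j\ge w-d+2=w-3$ then forces $j=w-3$ exactly. Hence
\[
\BB_{w}(\V^{(3)})=(-1)^{w-3}\binom{(w-3)+q+1-w}{w-3}\binom{q+1-3}{0}B_3(\V^{(3)})
=-(-1)^{w}\binom{q-2}{w-3}B_3(\V^{(3)}),
\]
which is a one-line computation with no telescoping or identity manipulation of the type used in the previous two theorems. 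Substituting this into \eqref{eq2_wd_cosetBon} and absorbing the first summand by means of \eqref{eq4:1stpartBon} immediately yields, for $w\ge 5$,
\[
B_w(\V^{(3)})=A_w(\CC)+(-1)^{w}\left(\binom{q+1}{w}\binom{w-1}{3}-\binom{q-2}{w-3}B_3(\V^{(3)})\right),
\]
which is the third line of \eqref{eq4:wdW3}.

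For the special value $w=4$, I would argue separately, since $A_4(\CC)=0$ (the code has minimum distance $5$) and $\binom{w-1}{3}=\binom{3}{3}=1$. Substituting $w=4$ into the same Bonneau identity (or directly into the formula just derived, after justifying that \eqref{eq4:1stpartBon} holds also at $w=4$) gives
\[
B_4(\V^{(3)})=\binom{q+1}{4}-(q-2)B_3(\V^{(3)}),
\]
matching the middle line of \eqref{eq4:wdW3}.

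I do not foresee a real obstacle: the only non-trivial step is recognising that exactly one pair $(j,v)=(w-3,3)$ contributes to $\BB_w(\V^{(3)})$, and the rest is bookkeeping plus the already-proved identity \eqref{eq4:1stpartBon}. The set-valued display for $B_3(\V^{(3)})$ requires no argument beyond citing the relevant rows of Table \ref{tab1} (which in turn rest on Theorem \ref{th3_3_coset}). Compared with Theorems \ref{th4:wdW1} and \ref{th4:wdW2}, this proof is strictly shorter because there are no remaining $v<3$ contributions to handle, so no application of the Riordan-type identities \eqref{eq4_Riordan_ident3} or \eqref{eq4_Riordan_ident4} is needed inside $\BB_w$.
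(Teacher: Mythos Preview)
Your proposal is correct and follows essentially the same approach as the paper's own proof: both reduce $\BB_w(\V^{(3)})$ to the single surviving term $(j,v)=(w-3,3)$ using \eqref{eq3:begin}, obtain $\BB_w(\V^{(3)})=-(-1)^{w}\binom{q-2}{w-3}B_3(\V^{(3)})$, and then combine this with \eqref{eq2_wd_cosetBon} and \eqref{eq4:1stpartBon}, treating $w=4$ separately via $A_4(\CC)=0$. Your write-up is in fact slightly more explicit than the paper's about why only that one $(j,v)$ pair contributes, but the argument is the same.
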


\begin{proof}
Similarly to Proof of Theorem \ref{th4:wdW2}, by \eqref{eq2_Bw2Bon}, \eqref{eq3:begin}, we have     \begin{align*}
     &\BB_{w}(\V^{(3)})=(-1)^{w-3}\binom{q-2}{w-3}\binom{q-2}{0}B_3(\V^{(3)})=-(-1)^{w}\binom{q-2}{w-3}B_3(\V^{(3)})
     \end{align*}
     whence, together with \eqref{eq2_wd_cosetBon} and \eqref{eq4:1stpartBon}, the assertion for $w\ge5$ follows. For $B_4(\V^{(3)})$, we take into account that $A_4(\CC)=0$.
\end{proof}

\begin{example}\label{ex4}
Examples of the weight distribution of the cosets of the code $\CC_\C$ are given in Table~\ref{tab2}. The notations of the cosets are taken from Table \ref{tab1}. The distributions are obtained by the relations \eqref{eq4:wdW1}--\eqref{eq4:wdW3} obtained in Theorems \ref{th4:wdW1}--\ref{th4:wdW3}.
\begin{table*}[htbp]
\centering
   \caption{Examples of the weight distribution of the cosets $\V^{(W)}$ of the $[q+1,q-3,5]_q3$  GDRS code $\CC_\C$.\newline
   $B_w=B_w(\V^{(W)})$, $N$ is the number of the cosets of the given type}
   $
   \begin{array}
   {@{}c@{\,}|@{}c@{}|@{\,}c@{\,}|@{\,}c@{\,}|@{\,}c@{\,}|@{\,}r@{\,}|r@{\,}|r@{\,}|r@{\,}|r@{\,}|r@{\,}|r@{\,}|r@{\,}|@{\,}r|@{\,}r@{}}
    \hline
   q&\text{coset}&B_1&B_2&B_3&B_4&B_5&B_6&B_7&B_8&B_9&B_{10}&B_{11}&B_{12}&N\\ \hline
   5&\CC_\C&0&0&0&0&24&0&&&&&&&1\\
   &\V^{(1)}&1&0&0&5&15&4&&&&&&&24\\
   &\V^{(2)}&0&1&1&6&11&6&&&&&&&240\\
   &\V^{(3)}_a&0&0&2&9&6&8&&&&&&&120\\
   &\V^{(3)}_b&0&0&3&6&9&7&&&&&&&160\vphantom{H_{H_{H_{H}}}}\\
   &\V^{(3)}_c&0&0&4&3&12&6&&&&&&&80\\\hline
   7&\CC_\C&0&0&0&0&336&336&1056&672&&&&&1\\
   &\V^{(1)}&1&0&0&35&217&490&966&692&&&&&48\\
   &\V^{(2)}_a&0&1&3&40&182&541&935&699&&&&&672\\
   &\V^{(2)}_b&0&1&4&35&192&531&940&698&&&&&336\vphantom{H_{H_{H_{H}}}}\\
   &\V^{(3)}_a&0&0&5&45&162&566&921&702&&&&&336\\
   &\V^{(3)}_b&0&0&7&35&182&546&931&700&&&&&1008\vphantom{H_{H_{H_{H_H}}}}\\\hline
   8&\CC_\C&0&0&0&0&882&1764&7812&12411&9898&&&    &1\\
   &\V^{(1)}&1&0&0&70&588&2268&7372&12606&9863&&&  &63\\
   &\V^{(2)}&0&1&5&75&523&2399&7251&12661&9853&&&  &1764 \\
   &\V^{(3)}_a&0&0&7&84&483&2464&7197&12684&9849&&&&504\\
   &\V^{(3)}_b&0&0&9&72&513&2424&7227&12672&9851&&&&1176\vphantom{H_{H_{H_{H}}}}\\
   &\V^{(3)}_c&0&0&10&66&528&2404&7242&12666&9852&&&&588\\\hline
   9&\CC_\C&0&0&0&0&2016&6720&40320&113760&205040&163584&&&1\\
   &\V^{(1)}&1&0&0&126&1386&8064&38760&114795&204669&163640&&   &80\\
   &\V^{(2)}&0&1&7&133&1267&8365&38389&115048&204577&163654&&   &2880 \\
   &\V^{(3)}_a&0&0&9&147&1197&8505&38235&115146&204543&163659&& &640\\
   &\V^{(3)}_b&0&0&12&126&1260&8400&38340&115083&204564&163656&&&2960\vphantom{H_{H_{H_{H_H}}}}\\\hline
   11&\CC_\C&0&0&0&0&7920&55440&554400&3366000&15037000&45074040&81962880&68301200&1\\
&\V^{(1)}&1&0&0&330&5742&61908&543180&3378375&15028145&45078044&81961836&68301320    & 120     \\
&\V^{(2)}&0&1&12&342&5424&63042&541080&3380763&15026408&45078837&81961628&68301344   & 6600      \\
&\V^{(3)}_a&0&0&15&360&5292&63420&540450&3381435&15025940&45079044&81961575&68301350 & 1320        \\
&\V^{(3)}_b&0&0&18&333&5400&63168&540828&3381057&15026192&45078936&81961602&68301347 & 4400         \vphantom{H_{H_{H_{H}}}}\\
&\V^{(3)}_c&0&0&19&324&5436&63084&540954&3380931&15026276&45078900&81961611&68301346 & 2200         \\\hline
      \end{array}
   $
 \label{tab2}
 \end{table*}
\end{example}

\section{The property of differences and a symmetry in the weight distributions of the cosets of the code $\CC_\C$}\label{sec_symmetry}

In the next theorem, for the code $\CC_\C$, we give the \emph{property of differences} for the cosets of equal weight having distinct weight distributions. The property asserts that, for $w\ge4$, the difference  between the $w$-th components of the distinct weight distributions is uniquely determined by the difference between  the $3$-rd components.
This property allows us to prove a symmetry of the coset weight distributions.
\begin{theorem}[\emph{\textbf{property of differences}}]\label{th5_differencesOfWeights}
 Let $W=2,3$. Let $3\le w\le q+1$. Let $\V^{(W)}_a, \V^{(W)}_b$ be two weight $W$ cosets of $\CC_\C$ with distinct weight distributions according to Table \emph{\ref{tab1}}. Then, for $4\le w\le q+1$, the difference $B_{w}(\V^{(W)}_a)-B_w(\V^{(W)}_b)$ between the $w$-th components of the distinct weight distributions is uniquely determined by the difference $B_{3}(\V^{(W)}_a)-B_{3}(\V^{(W)}_b)$ between  the $3$-rd components so that
\begin{align}\label{eq5_differencesOfWeights}
B_{w}(\V^{(W)}_a)-B_w(\V^{(W)}_b)=-(-1)^w\left[B_3(\V^{(W)}_a)-B_3(\V^{(W)}_b)\right]\binom{q-2}{w-3},~ W=2,3.
\end{align}
\end{theorem}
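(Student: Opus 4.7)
The approach is almost entirely bookkeeping built on Theorems \ref{th4:wdW2} and \ref{th4:wdW3}, which already give closed-form expressions for $B_w(\V^{(2)})$ and $B_w(\V^{(3)})$ in terms of the single varying parameter $B_3(\V^{(W)})$. The plan is therefore: fix $W\in\{2,3\}$ and two cosets $\V^{(W)}_a,\V^{(W)}_b$ of that weight with distinct distributions, and simply subtract the two expressions termwise. Everything that does not depend on the specific coset must cancel, leaving only the $B_3$-dependent part multiplied by a fixed coefficient.

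Concretely, for $w\ge5$ and $W=3$, Theorem \ref{th4:wdW3} writes $B_w(\V^{(3)})$ as $A_w(\CC)+(-1)^w[\binom{q+1}{w}\binom{w-1}{3}-\binom{q-2}{w-3}B_3(\V^{(3)})]$. The first two summands are independent of the particular weight $3$ coset chosen, so subtracting the expression for $\V^{(3)}_b$ from that for $\V^{(3)}_a$ leaves exactly $-(-1)^w\binom{q-2}{w-3}[B_3(\V^{(3)}_a)-B_3(\V^{(3)}_b)]$, which is the claim. The same argument works verbatim for $W=2$ using Theorem \ref{th4:wdW2}: the additional $-(-1)^w(w-3)\binom{q-1}{w-2}$ term present there also does not depend on the specific weight $2$ coset and therefore cancels in the subtraction, again leaving only the contribution of $\binom{q-2}{w-3}B_3(\V^{(2)})$.

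The case $w=4$ must be handled separately because Theorems \ref{th4:wdW2} and \ref{th4:wdW3} give the values of $B_4$ by a distinct formula (not a special case of the $w\ge5$ formula, since $A_4(\CC)=0$). But the same cancellation principle applies: for $W=3$ we have $B_4(\V^{(3)})=\binom{q+1}{4}-(q-2)B_3(\V^{(3)})$, and for $W=2$ we have $B_4(\V^{(2)})=\binom{q+1}{4}-\binom{q-1}{2}-(q-2)B_3(\V^{(2)})$. In either case, subtracting the two cosets of the same weight yields $-(q-2)[B_3(\V^{(W)}_a)-B_3(\V^{(W)}_b)]$, which equals $-(-1)^4\binom{q-2}{1}[B_3(\V^{(W)}_a)-B_3(\V^{(W)}_b)]$, matching \eqref{eq5_differencesOfWeights} at $w=4$.

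There is essentially no obstacle: the only thing to verify is that every summand independent of $B_3(\V^{(W)})$ on the right-hand side of the formulas in Theorems \ref{th4:wdW2} and \ref{th4:wdW3} truly does not vary between cosets of the same weight $W$. This is immediate because those summands depend only on $q$ and $w$, not on the choice of coset. Hence the short proof reduces to writing out the two expressions, subtracting, and observing the cancellation; it is convenient to treat $w\ge5$ and $w=4$ in two short paragraphs, and to note that the existence of distinct distributions (so that the statement is non-vacuous) has already been established in Theorem \ref{th3_disatinct_cosets}.
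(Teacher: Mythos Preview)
Your proposal is correct and follows exactly the paper's approach: the authors observe that in \eqref{eq4:wdW2} and \eqref{eq4:wdW3} the only term depending on the specific coset is $-(-1)^{w}\binom{q-2}{w-3}B_3(\V^{(W)})$, so subtraction yields the claim. Your version is slightly more explicit in treating the case $w=4$ separately, but the underlying idea is identical.
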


\begin{proof}
 In  \eqref{eq4:wdW2} and \eqref{eq4:wdW3}, the only term depending of $B_3(\V^{(W)})$ is $-(-1)^{w}\binom{q-2}{w-3}B_3(\V^{(W)})$. Therefore, the assertion follows from \eqref{eq4:wdW2} and \eqref{eq4:wdW3} for $W=2$ and $W=3$, respectively.
\end{proof}

 \begin{theorem}[\emph{\textbf{symmetry}}]\label{th5_symmetr}
Let  $w=3,\ldots,\lfloor(q+3)/2\rfloor$. Let $q\equiv\xi\pmod3$. Let the notation of the cosets be as in Table~\emph{\ref{tab1}}. There is the following symmetry in the weight distributions of the code~$\CC_\C$.
\begin{align*}
 &\textbf{\emph{(i)}}~~~W=2,\,\xi=1;~W=3,\,\xi=1;~W=3,\,\xi=0: \\
 &B_{q+4-w}(\V^{(W)}_a)-(-1)^qB_w(\V^{(W)}_a)=B_{q+4-w}(\V^{(W)}_b)-(-1)^qB_w(\V^{(W)}_b).\displaybreak[3]\\
 &\textbf{\emph{(ii)}}~~\xi=-1:\\
 &B_{q+4-w}(\V^{(3)}_a)-(-1)^qB_w(\V^{(3)}_a)=B_{q+4-w}(\V^{(3)}_b)-(-1)^qB_w(\V^{(3)}_b)\displaybreak[3]\\
 &=B_{q+4-w}(\V^{(3)}_c)-(-1)^qB_w(\V^{(3)}_c).
  \end{align*}
 \end{theorem}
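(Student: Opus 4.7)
The plan is to reduce the symmetry directly to the property of differences (Theorem \ref{th5_differencesOfWeights}) combined with the symmetry \eqref{eq4_Riordan_ident0} of binomial coefficients. First, I would rewrite the asserted equality in (i) in the equivalent form
\[
B_{q+4-w}(\V^{(W)}_a) - B_{q+4-w}(\V^{(W)}_b) = (-1)^q \bigl[B_w(\V^{(W)}_a) - B_w(\V^{(W)}_b)\bigr],
\]
and treat the three equalities in (ii) pairwise. Thus it suffices to establish the displayed identity for any two cosets $\V^{(W)}_a, \V^{(W)}_b$ of weight $W \in \{2,3\}$ with distinct weight distributions, among those listed in Table \ref{tab1}.

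Next, I would apply \eqref{eq5_differencesOfWeights} to both sides. Setting $\Delta := B_3(\V^{(W)}_a) - B_3(\V^{(W)}_b)$, Theorem \ref{th5_differencesOfWeights} gives
\[
B_{q+4-w}(\V^{(W)}_a) - B_{q+4-w}(\V^{(W)}_b) = -(-1)^{q+4-w}\, \Delta \binom{q-2}{q+1-w},
\]
\[
B_w(\V^{(W)}_a) - B_w(\V^{(W)}_b) = -(-1)^{w}\, \Delta \binom{q-2}{w-3}.
\]
Subtracting $(-1)^q$ times the second identity from the first and using $(-1)^{q+4-w} = (-1)^{q-w} = (-1)^{q+w}$, the difference of the two sides of the symmetry reduces to
\[
\Delta \,(-1)^{q+w}\left[\binom{q-2}{w-3} - \binom{q-2}{q+1-w}\right],
\]
which vanishes by \eqref{eq4_Riordan_ident0}, since $(q-2) - (w-3) = q+1-w$. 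This is the crux of the proof.

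Finally, I would check the range constraints and extend to case (ii). The hypothesis $3 \le w \le \lfloor (q+3)/2 \rfloor$ combined with $q \ge 5$ yields $3 \le w \le q+1$ and $3 \le q+4-w \le q+1$, so \eqref{eq5_differencesOfWeights} applies at both indices (at $w=3$ the formula degenerates to the tautology $\Delta = \Delta$ and so remains valid). For (ii), the argument above applied to each of the pairs $(\V^{(3)}_a,\V^{(3)}_b)$, $(\V^{(3)}_b,\V^{(3)}_c)$, $(\V^{(3)}_a,\V^{(3)}_c)$ shows that the three quantities $B_{q+4-w}(\V^{(3)}_x) - (-1)^q B_w(\V^{(3)}_x)$, $x \in \{a,b,c\}$, are mutually equal.

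No serious obstacle arises: once Theorem \ref{th5_differencesOfWeights} is established, the symmetry is an immediate consequence of the complementary identity $\binom{q-2}{k} = \binom{q-2}{q-2-k}$, with the sign bookkeeping taken care of by the fact that $w$ and $q+4-w$ have the same parity when $q$ is even and opposite parity when $q$ is odd, which is precisely what the factor $(-1)^q$ encodes. The genuinely substantive step is Theorem \ref{th5_differencesOfWeights} itself, which is already available; the conceptual content here is the observation that such a pleasing ``reflection'' $w \leftrightarrow q+4-w$ about the midpoint $(q+4)/2$ emerges naturally from the combinatorial data collected in Table \ref{tab1}.
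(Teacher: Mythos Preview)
Your proof is correct and follows essentially the same route as the paper: both rewrite the symmetry as a relation between the differences $B_{q+4-w}(\V^{(W)}_a)-B_{q+4-w}(\V^{(W)}_b)$ and $B_{w}(\V^{(W)}_a)-B_{w}(\V^{(W)}_b)$, then invoke Theorem~\ref{th5_differencesOfWeights} together with the binomial symmetry $\binom{q-2}{w-3}=\binom{q-2}{q+1-w}$ and the parity relation between $(-1)^w$ and $(-1)^{q+4-w}$. The only cosmetic difference is that the paper states the sign relation by splitting into the cases $q$ even/odd, whereas you package it uniformly with the factor $(-1)^q$; your explicit range check for applying \eqref{eq5_differencesOfWeights} at both $w$ and $q+4-w$ (including the degenerate case $w=3$) is a nice addition.
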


 \begin{proof}
By \eqref{eq4_Riordan_ident0}, $\binom{q-2}{q+4-w-3}= \binom{q-2}{w-3}$. Also, it is obvious that
\begin{align*}
(-1)^w=\left\{\begin{array}{lcl}
                                 (-1)^{q+4-w}  & \text{if} & q \text{ is even}\\
                                -(-1)^{q+4-w} & \text{if} &  q \text{ is odd}
                               \end{array}\right..
\end{align*}
Therefore, by Theorem \ref{th5_differencesOfWeights} and \eqref{eq5_differencesOfWeights}, we have
\begin{align*}
B_{q+4-w}(\V^{(W)}_a)-B_{q+4-w}(\V^{(W)}_b)=\left\{\begin{array}{lcl}
                                 -\left[B_{w}(\V^{(W)}_a)-B_w(\V^{(W)}_b)\right]  & \text{if} & q \text{ is even}\\
                                ~~~~~B_{w}(\V^{(W)}_a)-B_w(\V^{(W)}_b) & \text{if} &  q \text{ is odd}
                               \end{array}\right..
\end{align*}
 We apply this relation in all cases of Table \ref{tab1} (see also Theorem \ref{th3_disatinct_cosets}) when $\CC_\C$ cosets of the same weight have distinct weight distributions. This implies  the assertions.
 \end{proof}

\begin{example}
 By Table \ref{tab2}, we have the following.
  \begin{align*}
&\textbf{(i)}~~ q=7,~\xi=1,~(-1)^q=-1,~q+4=11;\displaybreak[3]\\
& B_{11-4}(\V^{(2)}_a)+B_4(\V^{(2)}_a)=B_{11-4}(\V^{(2)}_b)+B_4(\V^{(2)}_b)=975;\displaybreak[3]\\
& B_{11-3}(\V^{(3)}_a)+B_3(\V^{(3)}_a)=B_{11-3}(\V^{(3)}_b)+B_3(\V^{(3)}_b)=707.\displaybreak[3]\\
&~~~~ q=9,~\xi=0,~(-1)^q=-1,~q+4=13;\displaybreak[3]\\
& B_{13-6}(\V^{(3)}_a)+B_6(\V^{(3)}_a)=B_{13-6}(\V^{(3)}_b)+B_6(\V^{(3)}_b)=46740.\displaybreak[3]\\
&\textbf{(ii)}~~ q=8,~\xi=-1,~(-1)^q=1,~q+4=12;\displaybreak[3]\\
& B_{12-5}(\V^{(3)}_a)-B_5(\V^{(5)}_a)=B_{12-5}(\V^{(3)}_b)-B_5(\V^{(3)}_b)=B_{12-5}(\V^{(3)}_c)-B_5(\V^{(3)}_c)=6714.
  \end{align*}
\end{example}


\begin{thebibliography}{99}
\bibitem{Blahut} R.E. Blahut, Theory and Practice of Error Control Codes, Addison Wesley, Reading, 1984

\bibitem{MWS} F.J. MacWilliams, N.J.A. Sloane, The Theory of Error-Correcting Codes, third ed., North-
Holland, Amsterdam, The Netherlands, 1981

\bibitem{Roth}R.M. Roth, Introduction to Coding Theory, Cambridge, Cambridge Univ. Press, 2007

\bibitem{Hirs_PGFF} J.W.P. Hirschfeld, Projective Geometries over Finite Fields, second ed., Oxford University
Press, Oxford, 1998.
\bibitem{HirsSt-old} J.W.P. Hirschfeld, L. Storme, The
    packing problem in statistics, coding theory and finite projective spaces.
     J. Statist. Planning Inference \textbf{72}(1), 355--380 (1998)
      \url{https://doi.org/10.1016/S0378-3758(98)00043-3}

\bibitem{HirsStor-2001} J.W.P. Hirschfeld, L. Storme,    The
    packing problem in statistics, coding Theory and finite projective spaces: Update 2001, in: A. Blokhuis, J.W.P. Hirschfeld, D. Jungnickel, J.A. Thas (Eds.), Finite
    Geometries (Proc. 4th Isle of Thorns Conf., July 16-21, 2000),
    Dev. Math., vol. 3, Dordrecht: Kluwer, 2001, pp. 201--246
    \url{https://doi.org/10.1007/978-1-4613-0283-4_13}

\bibitem{HirsThas-2015}J.W.P. Hirschfeld, J.A. Thas,
 Open  problems in finite projective spaces,   Finite Fields
    Appl. \textbf{32}(1), 44--81 (2015) \url{https://doi.org/10.1016/j.ffa.2014.10.006}

\bibitem{LandSt}
I. Landjev, L. Storme, Galois geometry and coding theory. In: \emph{Current Research Topics in Galois geometry,} J. De.\,Beule and
  L. Storme, Eds., Chapter~8, pp. 187--214, NOVA Academic, New York, 2011

\bibitem{EtzStorm2016} T. Etzion, L. Storme, Galois geometries and coding theory.
   Des. Codes Crypt. \textbf{78}(1), 311--350 (2016) \url{https://doi.org/10.1007/s10623-015-0156-5}

\bibitem{Hirs_PG3q} J.W.P. Hirschfeld, Finite Projective Spaces of Three Dimensions, Oxford University
Press, Oxford, 1985

\bibitem{BrHirsTwCub}
A.A. Bruen, J.W.P. Hirschfeld, Applications of line geometry over finite
  fields {I}: The twisted cubic, Geom. Dedicata 6~(4), 495--509 (1977)
\url{https://doi.org/10.1007/BF00147786}

\bibitem{GiulVincTwCub}
M.~Giulietti, R.~Vincenti, Three-level secret sharing schemes from the twisted
  cubic, Discrete Math. 310~(22), 3236--3240 (2010)
\url{https://doi.org/10.1016/j.disc.2009.11.040}

\bibitem{BDMP_TwistCubArX}
D. Bartoli, A.A. Davydov, S. Marcugini, F. Pambianco,
On planes through points off the twisted cubic in PG(3,q) and multiple covering codes, Finite Fields Appl. \textbf{67} (Oct. 2020), paper 101710, 25 pp. \url{https://doi.org/10.1016/j.ffa.2020.101710}

\bibitem{CHLL_CovCodBook}
G.~Cohen, I.~Honkala, S.~Litsyn, A.~Lobstein, Covering codes, Vol.~54 of
  North-Holland Mathematical Library, Elsevier, Amsterdam, The Netherlands,
  1997.

\bibitem{BDGMP_MultCov}
D.~Bartoli, A.~A. Davydov, M.~Giulietti, S.~Marcugini, F.~Pambianco, Multiple
  coverings of the farthest-off points with small density from projective
  geometry, Advan. Math. Commun. 9~(1), 63--85 (2015)
\url{https://doi.org/10.3934/amc.2015.9.63}

\bibitem{BDGMP_MultCov2016}
D.~Bartoli, A.~A. Davydov, M.~Giulietti, S.~Marcugini, F.~Pambianco, Further
  results on multiple coverings of the farthest-off point, Advan. Math. Commun.
  10~(3), 613--632 (2016)
\url{https://doi.org/10.3934/amc.2016030}

\bibitem{HufPless}  W.C. Huffman, V.S. Pless, Fundamentals of Error-Correcting Codes, Cambridge University Press, 2003

\bibitem{HandbookCodes} V.S. Pless, W.C. Huffman, R.A. Brualdi, An Introduction to Algebraic Codes,
    In: V.S.~Pless, W.C. Huffman, R.A. Brualdi, (eds.) Handbook of Coding Theory, Chapter I, pp. 3--139,  Elsevier, Amsterdam, The Netherlands, 1998

\bibitem{AsmMat} E.F. Assmus, Jr., H.F. Mattson, Jr., The Weight-Distribution of a Coset of a Linear Code,  IEEE Trans. Inform.  Theory \textbf{24}(4), 497  (1978) \url{https://doi.org/10.1109/tit.1978.1055903}
    
\bibitem{Bonneau1990} P.G. Bonneau, Weight distribution of translates of MDS codes, Combinatorica \textbf{10}, 103-–105 (1990)    
\url{https://doi.org/10.1007/BF02122700}

\bibitem{BonnDuursma} A. Bonnecaze, I.M. Duursma, Translates of linear codes over $\mathbf{Z}_4$,
IEEE Trans. Inform. Theory \textbf{43}(4), 1218--1230 (1997)
\url{https://doi.org/10.1109/18.605585}

\bibitem{CharpHelZin} P. Charpin, T. Helleseth, V. Zinoviev, The Coset Distribution of Triple-Error-Correcting Binary
Primitive BCH Codes. IEEE Trans. Inform. Theory, \textbf{52}(4), 1727--1732 (2006)
\url{https://doi.org/10.1109/TIT.2006.871605}

\bibitem{CheungIEEE1989} K.-M. Cheung, More on the decoder error probability for Reed-Solomon codes.
IEEE Trans. Inform. Theory \textbf{35}(4), 895-–900 (1989) \url{https://doi.org/10.1109/18.32169}

\bibitem{CheungIEEE1992} K.-M. Cheung, On the decoder error probability of block codes. IEEE Trans. Commun. \textbf{40}(5), 857–-859 (1992)     \url{https://doi.org/10.1109/26.141450}

\bibitem{DMP_Weight1_2Cosets} A.A. Davydov, S. Marcugini, F. Pambianco, On integral weight spectra of the MDS codes cosets of weight 1, 2, and 3,
arXiv:2007.02405 [cs.IT] (2020) \url{https://arxiv.org/abs/2007.02405}

\bibitem{Delsarte4Fundam}  P. Delsarte, Four Fundamental Parameters of a Code
and Their Combinatorial Significance. Inform. Control \textbf{23}, 407--438 (1973)
\url{https://doi.org/10.1016/s0019-9958(73)80007-5}

\bibitem{DelsarteLeven} P. Delsarte, V.I. Levenshtein, Association Schemes and Coding Theory. IEEE Trans. Inform. Theory  \textbf{44}(6),
  2477--2504 (1998) \url{https://doi.org/10.1109/18.720545}

\bibitem{Helleseth}
T. Helleseth, The weight distribution of the coset leaders of some classes of codes with related
parity-check matrices, Discrete Math. 28, 161--171 (1979) \url{https://doi.org/10.1016/0012-365X(79)90093-1}

\bibitem{JurrPellik} R. Jurrius, R. Pellikaan, The coset leader and list weight enumerator. In: G. Kyureghyan, G.L. Mullen, A. Pott, Eds.,
Contemporary Math. vol. 632, Topics in Finite Fields (2015) 229--251
 \url{https://doi.org/10.1090/conm/632/12631},
 corrected version (2019) \url{https://www.win.tue.nl/~ruudp/paper/71.pdf}

\bibitem{KaipaIEEE2017}
K.~Kaipa, {Deep Holes and MDS Extensions of Reed–Solomon Codes}, IEEE Trans.
  Inform. Theory 63~(8) (2017) 4940 -- 4948.
\url{https://doi.org/10.1109/TIT.2017.2706677}

\bibitem{KasLin} T. Kasami, S. Lin, On the Probability of Undetected Error for the Maximum
Distance Separable Codes, IEEE Trans. Commun. \textbf{32}(9), 998--1006 (1984)  \url{https://doi.org/10.1109/TCOM.1984.1096175}

\bibitem{MW1963}  F.J. MacWilliams, A theorem on the distribution of weights in a
systematic code, Bell Syst. Tech. J. \textbf{42}(1), 79-94 (1963)
 \url{https://doi.org/10.1002/j.1538-7305.1963.tb04003.x}

\bibitem{Schatz1980} J.R. Schatz, On the Weight Distributions of Cosets
of a Linear Code. American Math. Month. \textbf{87}(7), 548-551 (1980)
\url{https://doi.org/10.1080/00029890.1980.11995087}

\bibitem{ZDK_DHRIEEE2019}
J.~Zhang, D.~Wan, K.~Kaipa, Deep
  Holes of Projective Reed-Solomon Codes. IEEE Trans. Inform. Theory \textbf{66}(4), 2392--2401 (2019)
\url{https://doi.org/10.1109/TIT.2019.2940962}

\bibitem{DelsarteBook} P. Delsarte, An Algebraic Approach to the Association Schemes of Coding Theory.
Philips Res. Repts. Suppl. \textbf{10}, 1973.

\bibitem{EzerGrasSoleMDS2011}
M.F. Ezerman, M. Grassl, P. Sole, The Weights in MDS Codes. IEEE Trans. Inform. Theory  \textbf{57}(1),
  392--396 (2011) \url{https://doi.org/10.1109/TIT.2010.2090246}

\bibitem{Riordan} J. Riordan, Combinatorial Identities,  Willey, New York, 1968
\end{thebibliography}
\end{document}